\newenvironment{landscapeblock}{%
  \clearpage
  \newgeometry{landscape,left=20mm,right=20mm,top=15mm,bottom=15mm,headheight=15pt,headsep=12pt}%
  \let\ps@plain\ps@fancy
  \pagestyle{fancy}\thispagestyle{fancy}%
}{%
  \clearpage
  \restoregeometry
  \pagestyle{fancy}%
}
\newtheorem{theorem}{Theorem}
\newtheorem{corollary}[theorem]{Corollary}
\newtheorem{definition}[theorem]{Definition}
\newtheorem{example}[theorem]{Example}
\newtheorem{lemma}[theorem]{Lemma}
\newtheorem{proposition}[theorem]{Proposition}
\newtheorem{remark}[theorem]{Remark}
\date{\today}
\author{Jos\'e M. Amig\'o\thanks{Corresponding author.} \\
Centro de Investigaci\'{o}n Operativa, \\
Universidad Miguel Hern\'{a}ndez, \\
03202 Elche, Spain \\
\texttt{jm.amigo@umh.es}
\And
Roberto Dale \\
Centro de Investigaci\'{o}n Operativa, \\
Universidad Miguel Hern\'{a}ndez, \\
03202 Elche, Spain \\
\texttt{rdale@umh.es}
}
\begin{document}

\title{\boldmath Permutation-Based Distances for Groups and Group-Valued Time Series}

\maketitle

\begin{abstract}
Permutations on a set, endowed with function composition,
build a group called a symmetric group. In addition to their algebraic
structure, symmetric groups have two metrics that are of particular interest to us here: the Cayley distance and the Kendall tau distance. In fact, the aim of this paper is to introduce the concept of distance in a general finite group based on them. The main tool that we use to this end is Cayley's theorem, which states that any finite group is isomorphic to a subgroup of a certain symmetric group. We also discuss the advantages and disadvantage of these permutation-based distances compared to the conventional generator-based distances in finite groups. The reason why we
are interested in distances on groups is that finite groups appear in
symbolic representations of time series, most notably in the so-called
ordinal representations, whose symbols are precisely permutations, usually called ordinal patterns in that context. The natural extension from groups to group-valued time series is also discussed, as well as how such metric tools can be applied in time series analysis. Both theory and applications are illustrated with examples and numerical simulations.
\end{abstract}

\keywords {finite groups \and permutations \and ordinal patterns \and transcripts \and edit distance \and Cayley and Kendall distances \and Cayley's theorem \and algebraic representations \and group-valued time series \and time series analysis} 

\section{Introduction}
\label{sec:Intro}

Symbolic representation of real-valued times series is a usual and useful
tool in data analysis, where numbers are replaced by discrete
``symbols'', {in order to gain more tools and insights~\cite{Hirata2023}.} So to speak, symbolic representations coarse-grain the data in such a way that the information retained is sufficient for the purposes of the analysis. From a mathematical point of view, this technique consists of partitioning the state space, both in
statistics and nonlinear methods. Traditional examples include binning and thresholding. More recently, Bandt and Pompe~\cite{Bandt2002} proposed to use ordinal patterns, which are the rank vectors of sliding windows along a time series, the size of the windows being the length of the ordinal patterns. Since then, ordinal representations, i.e., symbolic representations with ordinal patterns, have become a popular technique among
data analysts. {Common applications of ordinal patterns include classification using ordinal pattern-based indices~\mbox{\cite{Keller2003,Graff2013,Schlemmer2018}}, discrimination of chaotic signals from white noise~\cite{Rosso2007,Amigo2007}, characterization of dynamics and couplings~\cite{Bandt2002,Monetti2009,Parlitz2013} and nonparametric tests
of serial dependence~\cite{Weiss2022,Weiss2022B}, to mention a few.} For general
overviews, see~\cite{Amigo2013,Leyva2022,Amigo2023}.

More importantly for the topic of this paper, ordinal patterns of any given
length $L\geq 2$ can be interpreted as permutations (i.e., bijections) on
any set of $L$ elements, say, $\{1,2,\ldots,L\}$. In fact, the Shannon entropy
of a probability distribution of ordinal patterns is called permutation
entropy~\cite{Bandt2002}, and the same happens with any other entropic
functional based on ordinal pattern probability distributions, e.g.,
divergence, mutual information, or statistical complexity. A potential
advantage of viewing ordinal patterns of length $L$ as permutations is that
the latter build a group, namely, the symmetric group of degree $L$, denoted
by $\mathrm{Sym}(L)$, where the binary operation is function composition. In
fact, the algebraic structure of $\mathrm{Sym}(L)$ provides additional
leverage to ordinal representations that can be harnessed in time series
analysis. An example of this is the concept of transcript introduced in \cite%
{Monetti2009}.

More generally, symbolic representations whose symbols are elements of a
group are called algebraic representations, an ordinal representation being
an algebraic representation with alphabet $\mathrm{Sym}(L)$. Actually, most
results for ordinal representations can be readily generalized to algebraic
representations whose alphabets are any other finite group $\mathcal{G}$.
This is not surprising if no particular property of $\mathrm{Sym}(L)$ is
used in a given proof or application. There may be another, more theoretical
reason for this. According to Cayley's theorem~\cite{Herstein1996}, any
finite group $\mathcal{G}$ is isomorphic to a subgroup of a symmetric group.
This means that permutations are a sort of universal symbol for
discretizing time series by means of group elements; a different question is
whether such a ``canonical'' embedding is
always the best option in practice.

This being the case, in this paper, we extend two distances in $\mathrm{Sym}%
(L)$, namely, the Cayley distance and the Kendall tau distance (henceforth
called Kendall distance), to arbitrary finite groups via Cayley's theorem. A
possible advantage of the here-proposed distances compared to others (e.g.,
the conventional generator-based distances) is their expediency and
acceptable computation time for groups of moderate cardinality, as happens
in practice. By extension, we discuss also distances for group-valued time
series, which include algebraic representations of time series. This issue
raises naturally when comparing two time series to measure their
``similarity'' (think of classification or
clustering) or studying coupled systems (think of different types of
synchronization). The result is a suite of permutation-based (or ordinal
pattern-based) distances for groups and group-valued time series.

In sum, this is a follow-up paper on the quest to exploit the algebraic
structure of group-valued time series---a possibility rarely used in the
literature. Remarkably, the Cayley and Kendall distances and, hence, their
extensions to general groups, are actually norms of transcripts, which shows
the potential of our algebraic approach. Since our interest in distances
between group elements was motivated by the study of ordinal representations
and transcripts, we will speak of both permutations and ordinal patterns.

To address the aforementioned topics, we begin in Section~\ref{sec:Cayley}
by establishing the mathematical framework, which includes group actions and
group representations. In particular, we will prove Cayley's theorem and
implement it in three different ways---one of them using transcripts. There
and throughout this paper, our approach is formal, the theoretical concepts
being illustrated with simple examples. Section~\ref{sec:OrdPatyEditDist} is
dedicated to the symmetric group and its two standard metrics: the Cayley
and Kendall distances. In Section~\ref{sec:DistforGroups}, we transition from
the symmetric group to general groups and propose a distance based on
Cayley`s Theorem (Section~\ref{sec:DistforGroupsSub1}). This distance is
compared to the conventional string metric for finitely generated groups
(Section~\ref{sec:DistwithGenerators}) in Section~\ref{sec:Discussion}.
Possible extensions to distances between group-valued times series are
discussed in Section~\ref{sec:DistforTS} and illustrated with mathematical
simulations in Section~\ref{sec:NumSimul}. This paper ends with the
conclusions in Section~\ref{sec:Conclusion}.


\section{Groups, Group Actions and Cayley's Theorem}

\label{sec:Cayley}

In this section, we set the mathematical framework of this paper---group actions and group representations~\cite{Herstein1996,Lang2005,Fraleigh2013}.

\begin{definition}
\label{DefGroup} A group $(\mathcal{G},\ast )$ is a nonempty set $\mathcal{G}$
endowed with a binary operation ``$\ast $", sometimes called composition law or product, satisfying the following properties.

\begin{itemize}
\item[\textbf{\emph{(G1)}}] \emph{Associativity:} For all $a,b,c\in \mathcal{G}$, it is true
that $(a\ast b)\ast c=a\ast (b\ast c)$.

\item[\textbf{\emph{(G2)}}] \emph{Identity element:} There exists an element $e\in \mathcal{G%
}$, called the \textit{identity} (or \textit{neutral}) \textit{element},
such that $a\ast e=e\ast a=a$ for all $a\in \mathcal{G}$.

\item[\textbf{\emph{(G3)}}] \emph{Inverse element:} For every $a\in \mathcal{G}$, there
exists an element $a^{-1}\in \mathcal{G}$, called the \textit{inverse element%
} of $a$, such that $a\ast a^{-1}=a^{-1}\ast a=e$.
\end{itemize}
\end{definition}

It can be proved that the identity of a group and the inverse of each
element are unique. Groups whose product is commutative (i.e., $a\ast
b=b\ast a$ for all $a,b\in \mathcal{G}$) are called \textit{commutative} or 
\textit{abelian}. Examples of abelian groups are the real numbers endowed
with addition and the nonzero real numbers endowed with multiplication.
Invertible square matrices are examples of nonabelian groups under
multiplication. If the binary operation is clear from the context, then $(%
\mathcal{G},\ast )$ is shortened to $\mathcal{G}$.

\begin{definition}
\label{DefinitionAction}If $(\mathcal{G},\ast )$ is a group and $S$ a
nonempty set, then a left group action of $\mathcal{G}$ on $S$ is a mapping $%
F:\mathcal{G}\times S\rightarrow S$ such that it satisfies the following two
axioms:

\begin{itemize}
\item[\textbf{\emph{(L1a)}}]~\emph{Identity:} $F(e,s)=s$ for all $s\in S$, where $e$ is the
identity element of $\mathcal{G}$.

\item[\textbf{\emph{(L2a)}}]~\emph{Compatibility:} $F(a,F(b,s))=F(a\ast b,s)$ for all $a,b\in 
\mathcal{G}$ and $s\in S$.
\end{itemize}
\end{definition}

If $F$ is a left action of $\mathcal{G}$ on $S$, we can define the function $%
F_{a}:=F(a,\cdot ):S\rightarrow S$, i.e., 
\begin{equation}
F_{a}(s)=F(a,s)  \label{F_a}
\end{equation}%
for each $a\in \mathcal{G}$. For $F_{a}$, the axioms L1a and L2a read as
follows:

\begin{itemize}
\item[\textbf{(L1b)}]~Identity: \emph{$F_{e}$ is the identity mapping $s\mapsto s$ for all $%
s\in S$.}

\item[\textbf{(L2b)}]~Compatibility: \emph{$F_{a}\circ F_{b}=F_{a\ast b}$ for all $a,b\in 
\mathcal{G}$.}
\end{itemize}

\begin{lemma}
\label{LemmaAutGroup}

\begin{itemize}
\item[(i)] $F_{a}:S\rightarrow S$ is a bijection for each $a\in \mathcal{G}$. 

\item[(ii)] The set $\{F_{a}:S\rightarrow S:a\in \mathcal{G}\}$
endowed with function composition is a group.
\end{itemize}
\end{lemma}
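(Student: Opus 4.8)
The plan is to derive both parts directly from the reformulated axioms (L1b) and (L2b), which is where all the relevant structure lives. For part (i), I would show that $F_{a^{-1}}$ is a two-sided inverse of $F_a$ under composition: applying (L2b) to the pair $(a,a^{-1})$ gives $F_{a}\circ F_{a^{-1}}=F_{a\ast a^{-1}}=F_{e}$, and applying it to $(a^{-1},a)$ gives $F_{a^{-1}}\circ F_{a}=F_{a^{-1}\ast a}=F_{e}$; by (L1b), $F_{e}=\mathrm{id}_{S}$. Hence $F_{a}$ has a two-sided inverse as a self-map of $S$, so it is both injective and surjective, i.e.\ a bijection, and as a by-product $(F_{a})^{-1}=F_{a^{-1}}$.

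For part (ii), write $H:=\{F_{a}:a\in\mathcal{G}\}$, which by part (i) is a set of bijections of $S$, so function composition is a well-defined operation on it. I would then check the group axioms (G1)--(G3) for $(H,\circ)$ in order. Closure follows at once from (L2b), since $F_{a}\circ F_{b}=F_{a\ast b}\in H$; associativity (G1) is inherited for free, because composition of maps $S\to S$ is always associative; the identity element (G2) is $\mathrm{id}_{S}=F_{e}\in H$, which by (L1b) fixes every $F_{a}$ under composition on either side; and the inverse (G3) of $F_{a}$ is $F_{a^{-1}}\in H$, as already established in part (i). This presents $H$ as a subgroup of the symmetric group on $S$.

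No step here is genuinely difficult; the one thing to be careful about is the composition convention, namely that (L2b) reads $F_{a}\circ F_{b}=F_{a\ast b}$ rather than $F_{b\ast a}$ — this is precisely what distinguishes a \emph{left} action and what makes $a\mapsto F_{a}$ a homomorphism instead of an anti-homomorphism. I would close with the remark that the assignment $a\mapsto F_{a}$ is in fact a surjective group homomorphism $\mathcal{G}\to H$, a fact implicit in the argument above that will be convenient when Cayley's theorem is invoked; whether this homomorphism is injective (hence an isomorphism onto $H$) depends on the action being faithful, which is not assumed at this point.
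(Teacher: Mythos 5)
Your proof is correct and follows essentially the same route as the paper: both parts are derived directly from (L1b) and (L2b), with $F_{a^{-1}}$ serving as the inverse of $F_a$ and the group axioms checked for $\{F_a : a\in\mathcal{G}\}$ under composition. You are in fact slightly more careful than the paper, which verifies only $F_a\circ F_{a^{-1}}=F_e$ and leaves closure implicit, whereas you check the two-sided inverse and state closure explicitly; this is a strengthening, not a divergence.
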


\begin{proof}

\begin{itemize}
\item[(i)] Since $F_{a}$ is defined from $S$ into itself, it suffices to prove that
every $s\in S$ has an inverse. Indeed, $F_{a}^{-1}(s)=F_{a^{-1}}(s)\in S$
because $F_{a}(F_{a^{-1}}(s))=F_{a\ast a^{-1}}(s)=F_{e}(s)=s$ by axioms L2b
and L1b.

\item[(ii)] According to Definition~\ref{DefGroup}, we have to prove three
properties: (G1) associativity is a general property of the composition of
functions; (G2) $F_{e}$ is the identity because of axiom L1b; (G3)\ for all
mappings $F_{a}$, $(F_{a})^{-1}=F_{a^{-1}}$ as in (i).
\end{itemize}
\end{proof}

Bijections from a finite set $S$ onto itself are called \textit{permutations}%
. So, according to Lemma~\ref{LemmaAutGroup}\emph{(i)}
, the mappings $F_{a}$ are
permutations. The permutations on $S$, endowed with function composition,
build a group called the \textit{symmetric group} $\mathrm{Sym}(S)$. In this
paper, we consider only finite groups $\mathcal{G}$ and finite sets $S$, so,
if $\left\vert S\right\vert $ is the cardinality of $S$, then $\left\vert 
\mathrm{Sym}(S)\right\vert =\left\vert S\right\vert !$. Since the properties
of the permutations on $S$ do not depend on $S$ but only on $\left\vert
S\right\vert $, we choose $S=\{1,2,\ldots,\left\vert S\right\vert \}$, unless
otherwise stated, and also refer to $\mathrm{Sym}(S)$ as the symmetric group
of degree $\left\vert S\right\vert $, $\mathrm{Sym}(\left\vert S\right\vert
) $. As a historical note, the symmetric group goes back to \'{E}variste
Galois (1811--1832) and his work on the resolution of algebraic equations by
means of radicals.

Furthermore, Lemma~\ref{LemmaAutGroup}\emph{(ii)} states that the set of
permutations $\{F_{a}:S\rightarrow S:a\in \mathcal{G}\}$ is a subgroup (of
cardinality $\left\vert \mathcal{G}\right\vert $) of $\mathrm{Sym}(S)$. This
result together with axiom L2b, which spells out that the mapping $\Phi
:a\mapsto F_{a}$ preserves the algebraic structure of $\mathcal{G}$, are
merged in the following theorem.

\begin{theorem}
\label{ThmCayley}Any (left) group action $F:\mathcal{G}\times S\rightarrow S$
of a group $\mathcal{G}$ on a finite set $S$ defines a group homomorphism $%
\Phi :a\mapsto F_{a}$ from $\mathcal{G}$ into $\mathrm{Sym}(S)$. Therefore, $%
\Phi $ is a representation of the group $\mathcal{G}$ by means of
permutations $F_{a}:=F(a,\cdot ):S\rightarrow S$.
\end{theorem}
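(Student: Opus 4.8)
The plan is to verify that $\Phi:a\mapsto F_a$ satisfies the definition of a group homomorphism from $\mathcal{G}$ into $\mathrm{Sym}(S)$, after which the word ``representation'' is just terminology for a homomorphism into a symmetric (or general linear) group. The two things to check are: (1) $\Phi$ is well-defined as a map into $\mathrm{Sym}(S)$, i.e., each $F_a$ is actually an element of $\mathrm{Sym}(S)$; and (2) $\Phi$ respects the group operations, i.e., $\Phi(a\ast b)=\Phi(a)\circ\Phi(b)$ for all $a,b\in\mathcal{G}$.

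First I would dispatch (1): by Lemma~\ref{LemmaAutGroup}\emph{(i)}, every $F_a$ is a bijection $S\to S$, hence a permutation on $S$, hence an element of $\mathrm{Sym}(S)$. So $\Phi$ is a genuine map $\mathcal{G}\to\mathrm{Sym}(S)$. Next I would handle (2): this is precisely axiom L2b, $F_a\circ F_b=F_{a\ast b}$, which rewrites as $\Phi(a)\circ\Phi(b)=\Phi(a\ast b)$ — exactly the homomorphism property. For completeness I might note that a homomorphism automatically sends identity to identity and inverses to inverses: $\Phi(e)=F_e=\mathrm{id}_S$ by L1b, and $\Phi(a^{-1})=F_{a^{-1}}=(F_a)^{-1}=\Phi(a)^{-1}$, again from Lemma~\ref{LemmaAutGroup}\emph{(i)}. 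This confirms that the image $\{F_a:a\in\mathcal{G}\}$ is the subgroup of $\mathrm{Sym}(S)$ identified in Lemma~\ref{LemmaAutGroup}\emph{(ii)}, so the codomain claim is consistent.

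Honestly, there is no real obstacle here — the theorem is essentially a repackaging of Lemma~\ref{LemmaAutGroup} together with axiom L2b, and the bulk of the work was already done in proving that lemma. The only thing requiring a sentence of care is the logical direction: Definition~\ref{DefinitionAction} presents the action axioms L1a/L2a in terms of $F(a,s)$, and one must pass to the equivalent reformulation L1b/L2b in terms of the maps $F_a$ (noting the passage is immediate from the definition $F_a(s)=F(a,s)$ in~\eqref{F_a}) before L2b can be read off as the homomorphism identity. I would therefore write the proof as a short paragraph: invoke Lemma~\ref{LemmaAutGroup}\emph{(i)} for well-definedness into $\mathrm{Sym}(S)$, invoke axiom L2b for the homomorphism property, and close by remarking that a group homomorphism into a symmetric group is by definition a (permutation) representation of $\mathcal{G}$.
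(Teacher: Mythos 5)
Your proposal is correct and follows exactly the route the paper takes: the paper presents Theorem~\ref{ThmCayley} as the merging of Lemma~\ref{LemmaAutGroup} (each $F_a$ is a permutation and the $F_a$ form a subgroup of $\mathrm{Sym}(S)$) with axiom L2b, which is read off as the homomorphism identity $\Phi(a\ast b)=\Phi(a)\circ\Phi(b)$. Your added remarks on $\Phi(e)$ and $\Phi(a^{-1})$ are consistent with, and already contained in, the proof of that lemma.
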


In other words, every group $\mathcal{G}$ is isomorphic to a subgroup $%
\mathcal{H}$ of $\mathrm{Sym}(S)$, namely $\mathcal{H}=\Phi (\mathcal{G})$,
hence, $\left\vert \mathcal{H}\right\vert =\left\vert \mathcal{G}\right\vert 
$. In this formulation, Theorem~\ref{ThmCayley} is known as \textit{Cayley's
theorem}. Therefore, we will call $\Phi :\mathcal{G}\rightarrow $ $\mathrm{%
Sym}(S)$ Cayley's homomorphism and, abusing notation, $\Phi :\mathcal{G}%
\rightarrow $ $\mathcal{H}$ Cayley's isomorphism. Below, we will discuss
three different implementations of Cayley's isomorphism.

To apply Theorem~\ref{ThmCayley}, label the elements of $\mathcal{G}$ with
the conventional set $\{1,2,\ldots,\left\vert \mathcal{G}\right\vert \}$. For
every $a\in \mathcal{G}$, let%
\begin{equation}
F_{a}=\left( 
\begin{array}{ccccc}
1 & \cdots & k & \cdots & \left\vert \mathcal{G}\right\vert \\ 
F_{a}(1) & \cdots & F_{a}(k) & \cdots & F_{a}(\left\vert \mathcal{G}%
\right\vert )%
\end{array}%
\right) =\left( 
\begin{array}{ccccc}
1 & \cdots & k & \cdots & \left\vert \mathcal{G}\right\vert \\ 
n_{1} & \cdots & n_{k} & \cdots & n_{\left\vert \mathcal{G}\right\vert }%
\end{array}%
\right)  \label{2line form}
\end{equation}%
be the matrix (or two-line) form of the permutation $F_{a}$, where $%
(n_{1},\ldots,n_{k},\ldots,n_{\left\vert \mathcal{G}\right\vert })$ is a shuffle
of $(1,2,\ldots,\left\vert \mathcal{G}\right\vert )$. Therefore, every element $%
a\in \mathcal{G}$ can be identified with the\textit{\ one-line form} $%
(n_{1},n_{2},\ldots,n_{\left\vert \mathcal{G}\right\vert })$ of $F_{a}$. In the
numerical examples below, we will juxtapose the components of $%
(n_{1},n_{2},\ldots,n_{\left\vert \mathcal{G}\right\vert })$ and drop the
parentheses for a compact notation.

\begin{remark}
\label{REmarkRightLeft}In addition to left actions of a group $(\mathcal{G}%
,\ast )$ on a finite set $S$, there are also right actions $\tilde{F}%
:S\times \mathcal{G}\rightarrow S$, defined by (R1a) $\tilde{F}(s,e)=s$ for
all $s\in S$, and (R2a) $\tilde{F}(\tilde{F}(s,a),b)=\tilde{F}(s,a\ast b)$,
as well as the corresponding group homomorphism $a\rightarrow \tilde{F}_{a}:=%
\tilde{F}(\cdot ,a)$ from $\mathcal{G}$ to $\mathrm{Sym}(S)$, such that
(R1b) $\tilde{F}_{e}$ is the identity map $s\mapsto s$ for all $s\in S$, and
(R2b) $\tilde{F}_{a}\circ \tilde{F}_{b}=\tilde{F}_{a\ast b}$ for all $a,b\in 
\mathcal{G}$.   The difference between left and right actions is that in the
function composition $F_{a}\circ F_{b}=F_{a\ast b}$ (L2b), $F_{b}$ acts
first on $s\in S$ and $F_{a}$ second (as in the standard convention),
whereas in $\tilde{F}_{a}\circ \tilde{F}_{b}=\tilde{F}_{a\ast b}$ (R2b), $%
\tilde{F}_{a}$ acts first on $s\in S$ and $\tilde{F}_{b}$ second. Henceforth,
we only consider left actions because the binary operation of the symmetric
group, the main character of this paper, is precisely function composition
and so we can use the standard convention.
\end{remark}

There is a particular case of Theorem~\ref{ThmCayley} that is of special
interest here, namely, $S=\mathcal{G}$, i.e., when the group $\mathcal{G}$
acts on itself. In this particular case, we are going to highlight three
implementations of Cayley's isomorphism $\Phi :\mathcal{G}\ni a\mapsto
F_{a}\in \mathrm{Sym}(\mathcal{G})$ via left actions.

\begin{itemize}
\item[\textbf{(A)}] \textit{Left translations}: The mapping $(a,b)\mapsto \Lambda
(a,b)=a\ast b$ is a left action of $\mathcal{G}$ on itself, so 
\begin{equation}
\Lambda _{a}(b)=a\ast b  \label{FreeAction}
\end{equation}%
is a permutation on $\mathcal{G}$ for every $a\in \mathcal{G}$, called a
left translation by $a$.

\item[\textbf{(B)}] \textit{Right translations}: The mapping $(a,b)\mapsto
R(a,b)=b\ast a^{-1}$ is a left action of $\mathcal{G}$ on itself, so%
\begin{equation}
R_{a}(b)=b\ast a^{-1}  \label{Transcript}
\end{equation}%
is a permutation on $\mathcal{G}$ for every $a\in \mathcal{G}$, called a
right translation by $a$. Let us mention that the operation $R(a,b)$ is also
called the \textit{transcription} from the (source) symbol $a$ to the
(target) symbol $b$ in~\cite{Monetti2009}. Note that $\Lambda
_{a}(b)=R_{b^{-1}}(a)$ and $R_{a}(b)=\Lambda _{b}(a^{-1})$.

\item[\textbf{(C)}] \textit{Adjoint actions}: The mapping $(a,b)\mapsto \mathrm{Ad}%
(a,b)=a\ast b\ast a^{-1}$ is a left action of $\mathcal{G}$ on itself, so%
\begin{equation}
\mathrm{Ad}_{a}(b)=a\ast b\ast a^{-1}  \label{Adjoint}
\end{equation}%
is a permutation on $\mathcal{G}$ for every $a\in \mathcal{G}$, called the
adjoint action of $a$.
\end{itemize}

Comparing Equations (\ref{FreeAction})--(\ref{Adjoint}), we conclude that the
implementation (\ref{FreeAction}) of Cayley's isomorphism $\Phi :a\mapsto
F_{a}$ is the most convenient in practice, since the (one-line form of the)
permutations $\Lambda _{a}:b\mapsto a\ast b$ can be read immediately row by
row in the multiplication table of $\mathcal{G}$. Indeed, if $%
\{a_{1},a_{2},\ldots.,a_{\left\vert \mathcal{G}\right\vert }\}$ is an
enumeration of the elements of $\mathcal{G}$, then $\Lambda _{a_{i}}$ is the 
$i$-th row of the multiplication table
 $(a_{i}\ast a_{j})_{1\leq i,j\leq
\left\vert \mathcal{G}\right\vert }$, i.e.,%
\begin{equation}
\Lambda _{a_{i}}=(a_{i}\ast a_{1},\ldots,a_{i}\ast a_{j},\ldots,a_{i}\ast
a_{\left\vert \mathcal{G}\right\vert })=\left( 
\begin{array}{ccccc}
a_{1} & \cdots & a_{j} & \cdots & a_{\left\vert \mathcal{G}\right\vert } \\ 
a_{i}\ast a_{1} & \cdots & a_{i}\ast a_{j} & \cdots & a_{i}\ast
a_{\left\vert \mathcal{G}\right\vert }%
\end{array}%
\right) .  \label{Lambda(a)}
\end{equation}

\begin{example}
Let $\mathcal{G}=\mathrm{Sym}(3)$. By Equation (\ref{FreeAction}), the
isomorphic copies $\Lambda _{\mathbf{r}}\in \mathrm{Sym}(\mathcal{G})=%
\mathrm{Sym}(\mathrm{Sym}(3))=\mathrm{Sym}(6)$ of $\mathbf{r}\in
\{123,132,213,231,312,321\}\mathcal{\ }$are given by the rows of the
``multiplication" table of $\mathrm{Sym}(3) $,
 
\begin{equation}
\begin{tabular}{|c||l|l|l|l|l|l|}
\hline
$\mathbf{r}\circ \mathbf{s}$ & $123$ & $132$ & $213$ & $231$ & $312$ & $321$
\\ \hline\hline
$123$ & $123$ & $132$ & $213$ & $231$ & $312$ & $321$ \\ \hline
$132$ & $132$ & $123$ & $312$ & $321$ & $213$ & $231$ \\ \hline
$213$ & $213$ & $231$ & $123$ & $132$ & $321$ & $312$ \\ \hline
$231$ & $231$ & $213$ & $321$ & $312$ & $123$ & $132$ \\ \hline
$312$ & $312$ & $321$ & $132$ & $123$ & $231$ & $213$ \\ \hline
$321$ & $321$ & $312$ & $231$ & $213$ & $132$ & $123$ \\ \hline
\end{tabular}
\label{mult table Sym(3)}
\end{equation}%
where $\mathbf{r}\circ \mathbf{s}$ stands for the composition of the
permutation $\mathbf{r}$ that labels a row with the permutation $\mathbf{s}$ that labels a column. Therefore, 
\begin{equation}
\begin{tabular}{|c||l|l|l|l|l|l|}
\hline
& $123$ & $132$ & $213$ & $231$ & $312$ & $321$ \\ \hline\hline
$\Lambda _{123}$ & $123$ & $132$ & $213$ & $231$ & $312$ & $321$ \\ \hline
$\Lambda _{132}$ & $132$ & $123$ & $312$ & $321$ & $213$ & $231$ \\ \hline
$\Lambda _{213}$ & $213$ & $231$ & $123$ & $132$ & $321$ & $312$ \\ \hline
$\Lambda _{231}$ & $231$ & $213$ & $321$ & $312$ & $123$ & $132$ \\ \hline
$\Lambda _{312}$ & $312$ & $321$ & $132$ & $123$ & $231$ & $213$ \\ \hline
$\Lambda _{321}$ & $321$ & $312$ & $231$ & $213$ & $132$ & $123$ \\ \hline
\end{tabular}
\label{Sym(Sym(3))F}
\end{equation}%
For example, 
\begin{equation*}
\Lambda _{231}:123\mapsto 231,\;132\mapsto 213,\;213\mapsto 321,\;231\mapsto
312,\;312\mapsto 123,\;321\mapsto 132,
\end{equation*}%
or, in one-line form, $\Lambda _{231}=(231,213,321,312,123,132)$. From 
\begin{equation*}
123^{-1}=123,\;132^{-1}=132,\;213^{-1}=213,\;231^{-1}=312,\;312^{-1}=231,\;321^{-1}=321,
\end{equation*}%
table (\ref{mult table Sym(3)}) and  Equation (\ref{Transcript}), we obtain
similarly that the copies $R_{\mathbf{r}}\in \mathrm{Sym}(\mathrm{Sym}(3))$
of $\mathbf{r}\in \mathrm{Sym}(3)$ via right translations are given by \vspace{12pt}
\begin{equation}
\begin{tabular}{|c||l|l|l|l|l|l|}
\hline
& $123$ & $132$ & $213$ & $231$ & $312$ & $321$ \\ \hline\hline
$R_{123}$ & $123$ & $132$ & $213$ & $231$ & $312$ & $321$ \\ \hline
$R_{132}$ & $132$ & $123$ & $231$ & $213$ & $321$ & $312$ \\ \hline
$R_{213}$ & $213$ & $312$ & $123$ & $321$ & $132$ & $231$ \\ \hline
$R_{231}$ & $312$ & $213$ & $321$ & $123$ & $231$ & $132$ \\ \hline
$R_{312}$ & $231$ & $321$ & $132$ & $312$ & $123$ & $213$ \\ \hline
$R_{321}$ & $321$ & $231$ & $312$ & $132$ & $213$ & $123$ \\ \hline
\end{tabular}
\label{Sym(Sym(3))T}
\end{equation}
\end{example}

\begin{example}
\label{ExampleCyclicGroup}Let $\mathcal{G}=\{\theta ^{0},\theta ^{1},\theta
^{2},\theta ^{3}\}$ endowed with the product $\theta ^{i}\ast \theta
^{j}=\theta ^{j}\ast \theta ^{i}=\theta ^{i+j}$ where, in this example, the
exponents are taken modulo $4$. Hence, $\theta ^{0}$ is the identity and $%
(\theta ^{i})^{-1}=\theta ^{4-i}$. By definition, $\mathcal{G}$ is a cyclic
group generated by the element $\theta ^{1}$. Alternatively, $\mathcal{G}$
can be identified with the additive group $\{0,1,2,3\}$, where the sum is
taken modulo $4$.

(i) The four permutations $\Lambda _{\theta ^{i}}:\theta ^{j}\mapsto \theta
^{i}\ast \theta ^{j}=\theta ^{i+j}$, corresponding to Equation (\ref%
{FreeAction}) under the isomorphism $\Phi :\theta ^{i}\mapsto $ $\Lambda
_{\theta ^{i}}\in \mathrm{Sym}(\mathcal{G})$, are given in the following
table:%
\begin{equation}
\begin{tabular}{|c||c|c|c|c|}
\hline
& $\theta ^{0}$ & $\theta ^{1}$ & $\theta ^{2}$ & $\theta ^{3}$ \\ 
\hline\hline
$\Lambda _{\theta ^{0}}$ & $\theta ^{0}$ & $\theta ^{1}$ & $\theta ^{2}$ & $%
\theta ^{3}$ \\ \hline
$\Lambda _{\theta ^{1}}$ & $\theta ^{1}$ & $\theta ^{2}$ & $\theta ^{3}$ & $%
\theta ^{0}$ \\ \hline
$\Lambda _{\theta ^{2}}$ & $\theta ^{2}$ & $\theta ^{3}$ & $\theta ^{0}$ & $%
\theta ^{1}$ \\ \hline
$\Lambda _{\theta ^{3}}$ & $\theta ^{3}$ & $\theta ^{0}$ & $\theta ^{1}$ & $%
\theta ^{2}$ \\ \hline
\end{tabular}
\label{TableF}
\end{equation}%
So, for instance, the second row of this table spells out%
\begin{equation*}
\Lambda _{\theta ^{1}}:\theta ^{0}\mapsto \theta ^{1},\;\theta ^{1}\mapsto
\theta ^{2},\;\theta ^{2}\mapsto \theta ^{3},\;\theta ^{3}\mapsto \theta
^{0},\;
\end{equation*}%
or $\Lambda _{\theta ^{1}}=(\theta ^{1},\theta ^{2},\theta ^{3},\theta ^{0})$.

(ii) The four permutations $R_{\theta ^{i}}:\theta ^{j}\mapsto \theta
^{j}\ast (\theta ^{i})^{-1}=\theta ^{j-i}$, corresponding to Equation (\ref%
{Transcript}), under the isomorphism $\Phi :\theta ^{i}\mapsto $ $R_{\theta
^{i}}\in \mathrm{Sym}(\mathcal{G})$, are given in the following table:
\begin{equation}
\begin{tabular}{|c||c|c|c|c|}
\hline
& $\theta ^{0}$ & $\theta ^{1}$ & $\theta ^{2}$ & $\theta ^{3}$ \\ 
\hline\hline
$R_{\theta ^{0}}$ & $\theta ^{0}$ & $\theta ^{1}$ & $\theta ^{2}$ & $\theta
^{3}$ \\ \hline
$R_{\theta ^{1}}$ & $\theta ^{3}$ & $\theta ^{0}$ & $\theta ^{1}$ & $\theta
^{2}$ \\ \hline
$R_{\theta ^{2}}$ & $\theta ^{2}$ & $\theta ^{3}$ & $\theta ^{0}$ & $\theta
^{1}$ \\ \hline
$R_{\theta ^{3}}$ & $\theta ^{1}$ & $\theta ^{2}$ & $\theta ^{3}$ & $\theta
^{0}$ \\ \hline
\end{tabular}
\label{TableT}
\end{equation}%
So, if in table (\ref{TableF}), $\Lambda _{\theta ^{i+1}}$ is obtained from $%
\Lambda _{\theta ^{i}}$ by a clockwise (negative) circular shift, in table (\ref{TableT}), the circular shift to obtain $R_{\theta ^{i+1}}$ from $%
R_{\theta ^{i}}$ is counterclockwise (positive).
\end{example}


\section{Ordinal Patterns and Distances}

\label{sec:OrdPatyEditDist}

In the previous, section we have focused on group actions and the embedding
of a group in a symmetric group. What is still missing is metric tools that
can further boost applications in the realm of group-valued time series.
Since the motivation and objective of this paper are the applications of
such tools to symbolic representations of time series via group elements, we
begin this section by briefly explaining how such symbolic representations
arise in time series analysis. The choice of ordinal patterns (or
permutations) responds to the popularity of these symbols among time series
analysts. Then, we introduce the concept of distance in the symmetric group
and, in the next section, we do the same for general groups.

\subsection{Ordinal Patterns}

\label{sec:OrdPat}

Symmetric groups are very popular for symbolic representations since the
concept of \textit{ordinal pattern} was introduced in~\cite{Bandt2002}.
Given a real-valued time series $x=(x_{t})_{t\geq 0}$, an \textit{ordinal
representation} of $x$ is a symbolic time series $(\mathbf{r}_{t})_{t\geq 0}$
whose alphabet is $\mathrm{Sym}(L)$, the symmetric group of degree $L\geq 2$%
. How are the permutations $\mathbf{r}_{t}$ obtained from $x$? Let $%
x_{t}^{L}:=x_{t},x_{t+1},\ldots,x_{t+L-1}$ be a window (segment, sequence,
block, \ldots) of size $L$. Then, $\mathbf{r}_{t}=(r_{1},r_{2},\ldots,r_{L})$ is
the rank vector of $x_{t}^{L}$, that is, $(r_{1},r_{2},\ldots,r_{L})$ is the
permutation of $\{1,2,\ldots,L\}$ such that%
\begin{equation}
x_{t+r_{1}-1}<x_{t+r_{2}-1}<\ldots<x_{t+r_{L}-1}.  \label{ord pat}
\end{equation}%
In other words, the rank vector $\mathbf{r}_{t}$ is viewed as the one-line
form of the permutation $1\mapsto r_{1}$, $2\mapsto r_{2}$, \ldots, $L\mapsto
r_{L}$, i.e., $\mathbf{r}_{t}(k)=r_{k}$ for $1\leq k\leq L$. As a matter of
fact, any total ranking can be viewed as a permutation. In case of a tie $%
x_{i}=x_{j}$, one can apply the convention that $x_{i}<x_{j}$ if $i<j$.
Another possibility, more recommended in case of many ties, is to add a
small-amplitude noise to $x_{i}$ and $x_{j}$ to undo the tie. \ As way of
illustration, if $L=4$ and $x_{t}^{L}=2.1,\,0.3,\,1.5,\,2.4$, then $\mathbf{r%
}_{t}=(2,3,1,4)$, or $\mathbf{r}_{t}=2314$ for short.

In~\cite{Bandt2002}, the permutations $\mathbf{r}_{t}$ were called order (or
ordinal) patterns of length $L$, which is the usual name of the
symbols $\mathbf{r}_{t}$ in time series analysis. In addition to the length $%
L$ of the patterns, ordinal representations depend also on a second
parameter: a possible time delay in Equation (\ref{ord pat}). In this paper,
the time delay is set equal to 1 throughout.

{As a side note, the concept of ordinal pattern has been generalized in several directions. Thus, it has been extended to multivariate time series in~\cite{Mohr2020,Amigo2013B}. Spatial ordinal patterns were introduced in~\cite{Ribeiro2012} to analyze two-dimensional images and applied in~\cite{Zunino2016,Bandt2023} to distinguish textures.}


\subsection{Distances for Ordinal Patterns}

\label{sec:EditDist}

In this section, we introduce the Cayley and Kendall distances for the
symmetric group $\mathrm{Sym}(L)$; see~\cite{Deza1998} for a survey about
distances on permutations. We remind first about the concept of distance.

\begin{definition}
\label{DefDistance}Given a nonempty set $S$, a distance is a function $%
d:S\times S\rightarrow \mathbb{R}$ that satisfies the following three axioms
for all points $x,y,z\in S$.

\begin{itemize}
\item[\textbf{\emph{(D1)}}] \emph{Positivity:} $d(x,y>0$ and $d(x,y)=0$ if and only if $x=y.$

\item[\textbf{\emph{(D2)}}] \emph{Symmetry:} $d(x,y)=d(y,x)$.

\item[\textbf{\emph{(D3)}}] \emph{Triangular inequality: }$d(x,z)\leq d(x,y)+d(y,z)$.
\end{itemize}
\end{definition}

Following the notation in Section~\ref{sec:OrdPat} for ordinal patterns, the
permutations of $\mathrm{Sym}(L)$ will be written in the one-line form $%
\mathbf{r}=(r_{1},r_{2},\ldots,r_{L})$ (possibly shortened to $%
r_{1},r_{2},\ldots,r_{L}$ in numerical examples), where $\mathbf{r}(i)=r_{i}$. If,
furthermore, $\mathbf{s}=(s_{1},s_{2},\ldots,s_{L})\in \mathrm{Sym}(L)$, then $%
\mathbf{r}\circ \mathbf{s}$ is the usual function composition $(\mathbf{r}%
\circ \mathbf{s})(i)=\mathbf{r}(\mathbf{s}(i))$, i.e.,%
\begin{equation}
\mathbf{r}\circ \mathbf{s}=(r_{1},\ldots,r_{k},\ldots,r_{L})\circ
(s_{1},\ldots,s_{k},\ldots,s_{L})=(r_{s_{1}},\ldots,r_{s_{k}},\ldots,r_{s_{L}}),
\label{r rosquilla s}
\end{equation}%
as exemplified in Equation (\ref{mult table Sym(3)}) for $L=3$. Due to the
positivity and symmetry properties of a distance, the $L!\times L!$ \textit{%
distance matrix} $(d(\mathbf{r},\mathbf{s}):\mathbf{r},\mathbf{s}\in \mathrm{%
Sym}(L))$ is symmetric, with $0$'s along the diagonal.

If $\{i_{1},i_{2},\ldots,i_{m}\}\subset \{1,2,\ldots,L\}$, then $%
(i_{1},i_{2},\ldots, i_{m})$ denotes the permutation%
\begin{equation}
i_{1}\mapsto i_{2},\;i_{2}\mapsto i_{3},\;\ldots,\;i_{m-1}\mapsto
i_{m},\;i_{m}\mapsto i_{1},\;  \label{cycle}
\end{equation}%
called a \textit{cycle of length} $m$, $1\leq m\leq L$, or simply an $m$%
-cycle. The notation calls for a warning at this point: do not confuse the
permutation $i_{1},i_{2}\ldots, i_{m}=(i_{1},i_{2},\ldots,i_{m})$ with the cycle $%
(i_{1},i_{2},\ldots,i_{m})$. Every permutation can be written as a product of
disjoint cycles, which is unique except for the order of the factors. For
example, the cycle factorization of the permutation $426135$ is $%
(14)(2)(356) $ or $(14)(356)$ if $1$-cycles (``fixed
elements'') are omitted.

Cycles of length 2 are called \textit{transpositions}. That is, a
transposition is a permutation $\mathbf{t}_{ij}\in \mathrm{Sym}(L)$ such
that $\mathbf{t}_{ij}(i)=j$, $\mathbf{t}_{ij}(j)=i$, and $\mathbf{t}%
_{ij}(k)=k$ for all $k\neq i,j$. If $\mathbf{r}=(r_{1},\ldots,r_{L})$, then%
\begin{equation}
\mathbf{r\circ t}%
_{ij}=(r_{1},\ldots,r_{i-1},r_{j},r_{i+1},\ldots,r_{j-1},r_{i},r_{j+1},\ldots,r_{L}).
\label{transposition}
\end{equation}%
If $\left\vert i-j\right\vert =1$, then $\mathbf{t}_{ij}$ is called an 
\textit{adjacent transposition}. Unlike the factorization of permutations into disjoint cycles, the factorization of permutations into adjacent transpositions (and, hence, into transpositions) is not unique, although the minimal number of factors is. For example, $321 = (12)(23)(12) = (23)(12)(23)$.

\begin{definition}[\cite{Nguyen2024,Kendall1938}]\label{DefDistances}
 Let $\mathbf{r},\mathbf{s}%
\in \mathrm{Sym}(L)$. (a) The \emph{Cayley distance} between the two
permutations $\mathbf{r}$ and $\mathbf{s}$, denoted by $d_{C}(\mathbf{r},%
\mathbf{s})$, is defined as the minimum number of transpositions needed to
transform $\mathbf{r}$ into $\mathbf{s}$. (b) The \emph{Kendall distance}
(also known as the bubble-sort distance) between $\mathbf{r}$ and $\mathbf{s}
$, denoted by $d_{K}(\mathbf{r},\mathbf{s})$, is defined as the minimum
number of \emph{adjacent} transpositions needed to transform $\mathbf{r}$
into $\mathbf{s}$.
\end{definition}

The Cayley and Kendall distances are examples of edit distances between two strings of symbols, which measure the minimum cost sequence of allowed edit operations to transform one string into the other. The use of edit distances to measure the distance between permutations was proposed in~\cite{Sorensen2007}. By definition,
\begin{equation}
d_{C}(\mathbf{r},\mathbf{s})\leq d_{K}(\mathbf{r},\mathbf{s})
\label{d_C <= d_K}
\end{equation}%
for all $\mathbf{r},\mathbf{s}\in \mathrm{Sym}(L)$.

The proofs of the positivity and symmetry (properties (D1) and (D2) in Definition~\ref{DefDistance}) for $d_{C}(\mathbf{r},\mathbf{s})$ and $d_{K}(%
\mathbf{r},\mathbf{s})$ are straightforward. The triangular inequality can
be easily proved by graph-based methods since the permutations of $\mathrm{%
Sym}(L)$ build a connected undirected graph where the nodes (or vertices) correspond to
permutations and the links (or edges) to transpositions. For example, in the case of $%
d_{K}(\mathbf{r},\mathbf{s})$: (i) every node $\mathbf{r}$ is connected to
exactly $L-1$ nearest neighbors, namely, those permutations that differ from 
$\mathbf{r}$ due to transpositions of the adjacent symbols $r_{i},r_{i+1}$
for $1\leq i\leq L-1$, and, hence, (ii) for any two nearest nodes $\mathbf{u}
$ and $\mathbf{v}$, $d_{K}(\mathbf{u},\mathbf{v})=d_{K}(\mathbf{v},\mathbf{u}%
)=1$. Therefore, $d_{K}(\mathbf{r},\mathbf{s})$ counts the number of links
of the shortest path connecting the nodes $\mathbf{r}$ and $\mathbf{s}$. In
other words, each node has degree $L-1$ and all its nearest neighbors (one
link apart) are at distance $1$. The diameter of the graph, i.e., the
farthest distance between any two nodes, corresponds to $\mathbf{r}%
=(r_{1},r_{2},\ldots,r_{L})$ and the order reversing permutation $\mathbf{s}%
=(r_{L},r_{L-1},\ldots,r_{1})$, hence%
\begin{equation}
d_{K,\max }(L)=(L-1)+(L-2)+\ldots+1=\frac{L(L-1)}{2}.  \label{Lmax}
\end{equation}%
Such graphs are called \textit{adjacency graphs} or networks.

Figures~\ref{Figure1} and~\ref{Figure5} show the adjacency graphs of the groups $%
\mathrm{Sym}(3)$ (a cycle in this case) and $\mathrm{Sym}(4)$, respectively.
Unlike the adjacency graphs for the Kendall distance, the adjacency graphs
for the Cayley distance are in general nonplanar, i.e., they have
edge crossings (even for $\mathrm{Sym}(3)$), so we will not use them.

\begin{figure}[H]
\centering
\includegraphics[width=40mm]{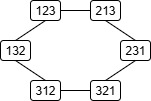}
\caption{Kendall adjacency graph of $\text{Sym}(3)$. A link between two
nodes means that the corresponding permutations differ by an adjacent
transposition, i.e., the Kendall distance between them is 1.}
\label{Figure1}
\end{figure}

\begin{figure}[H]
\centering
\includegraphics[width=135mm]{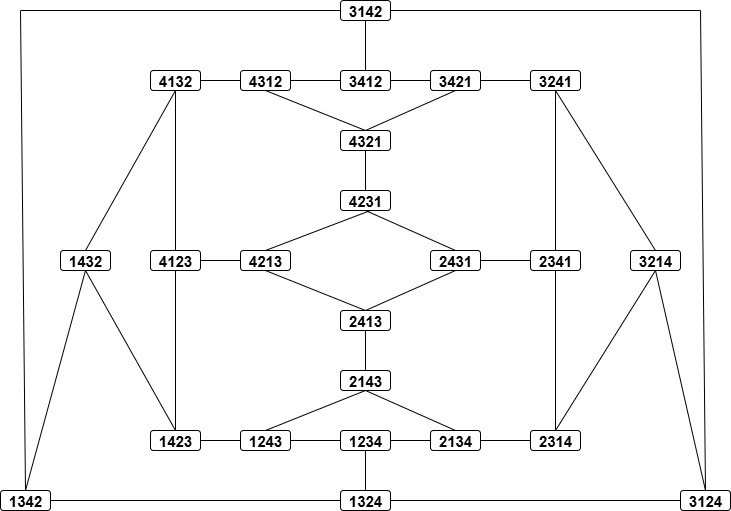}
\caption{Kendall adjacency graph of $\text{Sym}(4)$. A link between two
permutations means that the Kendall distance between them is 1.}
\label{Figure5}
\end{figure}

In the following, whenever convenient for economy of notation, we denote by $%
d_{C,K}$ both the Cayley and Kendall distances.

\begin{proposition}[Invariance of $d_{C,K}$ under left translations]\label{PropInvariance} 
Given $\mathbf{r},\mathbf{s}\in \mathrm{Sym}(L)$, then%
\begin{equation}
d_{C,K}(\mathbf{r},\mathbf{s})=d_{C,K}(\mathbf{u}\circ \mathbf{r},\mathbf{u}%
\circ \mathbf{s})  \label{left invariance}
\end{equation}%
for all $\mathbf{u}\in \mathrm{Sym}(L)$.
\end{proposition}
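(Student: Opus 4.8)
The plan is to unwind Definition~\ref{DefDistances}. By Equation~(\ref{transposition}), post-composing a permutation with a transposition $\mathbf{t}_{ij}$ is exactly the elementary operation of interchanging the entries sitting in positions $i$ and $j$ of its one-line form---an \emph{adjacent} interchange precisely when $\left\vert i-j\right\vert =1$. Hence ``transforming $\mathbf{r}$ into $\mathbf{s}$'' by a chain of $k$ such interchanges means exhibiting (adjacent) transpositions $\mathbf{t}_{1},\ldots ,\mathbf{t}_{k}$ with
\[
\mathbf{s}=\mathbf{r}\circ \mathbf{t}_{1}\circ \cdots \circ \mathbf{t}_{k},
\]
and $d_{C,K}(\mathbf{r},\mathbf{s})$ is the least such $k$. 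Starting from an optimal chain of length $k=d_{C,K}(\mathbf{r},\mathbf{s})$, I would left-compose every permutation occurring in it with $\mathbf{u}$: by associativity (axiom (G1)), $\mathbf{u}\circ (\mathbf{p}\circ \mathbf{t})=(\mathbf{u}\circ \mathbf{p})\circ \mathbf{t}$, so the new chain uses the \emph{same} transpositions $\mathbf{t}_{1},\ldots ,\mathbf{t}_{k}$, starts at $\mathbf{u}\circ \mathbf{r}$ and ends at $\mathbf{u}\circ \mathbf{s}$. This already gives $d_{C,K}(\mathbf{u}\circ \mathbf{r},\mathbf{u}\circ \mathbf{s})\leq d_{C,K}(\mathbf{r},\mathbf{s})$.

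For the reverse inequality I would apply the bound just obtained to the pair $\mathbf{u}\circ \mathbf{r},\mathbf{u}\circ \mathbf{s}$ and to the translation by $\mathbf{u}^{-1}$, using $\mathbf{u}^{-1}\circ (\mathbf{u}\circ \mathbf{r})=\mathbf{r}$ and $\mathbf{u}^{-1}\circ (\mathbf{u}\circ \mathbf{s})=\mathbf{s}$; this yields $d_{C,K}(\mathbf{r},\mathbf{s})\leq d_{C,K}(\mathbf{u}\circ \mathbf{r},\mathbf{u}\circ \mathbf{s})$, and the two inequalities together give Equation~(\ref{left invariance}). Equivalently, and more compactly: $d_{C,K}(\mathbf{r},\mathbf{s})$ is the minimal number of (adjacent) transpositions whose product equals $\mathbf{r}^{-1}\circ \mathbf{s}$, so it depends on the pair $(\mathbf{r},\mathbf{s})$ only through the single permutation $\mathbf{r}^{-1}\circ \mathbf{s}$; and since $(\mathbf{u}\circ \mathbf{r})^{-1}\circ (\mathbf{u}\circ \mathbf{s})=\mathbf{r}^{-1}\circ \mathbf{u}^{-1}\circ \mathbf{u}\circ \mathbf{s}=\mathbf{r}^{-1}\circ \mathbf{s}$, both distances are unchanged.

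The only point that genuinely requires care---and the reason the statement is confined to \textbf{left} translations---is the composition convention. The argument runs because right-multiplication by the editing transpositions slides past left-multiplication by $\mathbf{u}$ without touching those transpositions. For a right translation $\mathbf{r}\mapsto \mathbf{r}\circ \mathbf{w}$ one is instead forced to replace each $\mathbf{t}_{i}$ by its conjugate $\mathbf{w}^{-1}\circ \mathbf{t}_{i}\circ \mathbf{w}$; a conjugate of a transposition is again a transposition (so $d_{C}$ turns out to be right-invariant as well), but a conjugate of an \emph{adjacent} transposition need not be adjacent, so this route collapses for $d_{K}$. That asymmetry between the two distances under right translations is exactly what makes left translations the natural scope of Proposition~\ref{PropInvariance}; for the left-invariant case itself nothing beyond the associativity and inverse axioms of Definition~\ref{DefGroup} and the reading of Definition~\ref{DefDistances} through Equation~(\ref{transposition}) is needed, so I do not expect any real obstacle here.
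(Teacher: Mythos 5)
Your proof is correct and follows essentially the same route as the paper's: take an optimal chain of (adjacent) transpositions realizing $d_{C,K}(\mathbf{r},\mathbf{s})$, left-compose everything with $\mathbf{u}$, and observe via associativity that the editing transpositions are untouched. You are in fact slightly more careful than the paper, whose proof stops after exhibiting a chain of length $k$ from $\mathbf{u}\circ\mathbf{s}$ to $\mathbf{u}\circ\mathbf{r}$ (which only gives $d_{C,K}(\mathbf{u}\circ\mathbf{r},\mathbf{u}\circ\mathbf{s})\leq k$) and asserts equality, whereas you close the gap explicitly by translating back with $\mathbf{u}^{-1}$.
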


\begin{proof}
Suppose $d_{C,K}(\mathbf{r},\mathbf{s})=k$, i.e., $k$ is the mimimum number
of transpositions or adjacent transpositions $\mathbf{t}_{i_{1}j_{1}},%
\mathbf{t}_{i_{2}j_{2}},\ldots,\mathbf{t}_{i_{k}j_{k}}\in \mathrm{Sym}(L)$ such
that%
\begin{equation*}
\mathbf{r}=(\ldots((\mathbf{s}\circ \mathbf{t}_{i_{1}j_{1}})\circ \mathbf{t}%
_{i_{2}j_{2}})\circ \ldots \circ \mathbf{t}_{i_{k-1}j_{k-1}})\circ \mathbf{t}%
_{i_{k}j_{k}},
\end{equation*}%
see Equation (\ref{transposition}). Then,%
\begin{equation*}
\mathbf{u}\circ \mathbf{r}=(\ldots((\mathbf{u}\circ \mathbf{s}\circ \mathbf{t}%
_{i_{1}j_{1}})\circ \mathbf{t}_{i_{2}j_{2}})\circ \ldots \circ \mathbf{t}%
_{i_{k-1}j_{k-1}})\circ \mathbf{t}_{i_{k}j_{k}},
\end{equation*}%
which proves that $d_{C,K}(\mathbf{u}\circ \mathbf{r},\mathbf{u}\circ 
\mathbf{s})=k.$
\end{proof}

Since $d_{C,K}(\mathbf{r},\mathbf{s})=d_{C,K}(\mathbf{s},\mathbf{r})$, then $%
d_{C,K}(\mathbf{u}\circ \mathbf{r},\mathbf{u}\circ \mathbf{s})=d_{C,K}(%
\mathbf{u}\circ \mathbf{s},\mathbf{u}\circ \mathbf{r}).$ Choose $\mathbf{u}=%
\mathbf{r}^{-1}$ or $\mathbf{u}=\mathbf{s}^{-1}$ in Equation (\ref{left
invariance}) to prove:

\begin{corollary} \label{PropInvariance2}
For every $\mathbf{r},\mathbf{s}\in \mathrm{Sym}(L)$, 
\begin{equation}
d_{C,K}(\mathbf{r},\mathbf{s})=d_{C,K}(\mathbf{e},\mathbf{r}^{-1}\circ 
\mathbf{s})=d_{C,K}(\mathbf{e},\mathbf{s}^{-1}\circ \mathbf{r}),
\label{d invariance}
\end{equation}%
where $\mathbf{e}$ is the identity permutation.
\end{corollary}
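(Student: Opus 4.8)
The plan is to derive the Corollary as an immediate specialization of Proposition~\ref{PropInvariance}, choosing the free left multiplier $\mathbf{u}$ judiciously, and then invoking the symmetry axiom (D2) of Definition~\ref{DefDistance} once. No new combinatorial input is needed; the entire mathematical content already resides in the left-invariance statement.

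First I would put $\mathbf{u}=\mathbf{r}^{-1}$ in Equation~(\ref{left invariance}). Since $\mathbf{r}^{-1}\circ\mathbf{r}=\mathbf{e}$ (the identity permutation), this yields $d_{C,K}(\mathbf{r},\mathbf{s})=d_{C,K}(\mathbf{r}^{-1}\circ\mathbf{r},\mathbf{r}^{-1}\circ\mathbf{s})=d_{C,K}(\mathbf{e},\mathbf{r}^{-1}\circ\mathbf{s})$, which is the first claimed equality. For the second equality I would instead put $\mathbf{u}=\mathbf{s}^{-1}$, obtaining $d_{C,K}(\mathbf{r},\mathbf{s})=d_{C,K}(\mathbf{s}^{-1}\circ\mathbf{r},\mathbf{s}^{-1}\circ\mathbf{s})=d_{C,K}(\mathbf{s}^{-1}\circ\mathbf{r},\mathbf{e})$; a single application of (D2) then rewrites the right-hand side as $d_{C,K}(\mathbf{e},\mathbf{s}^{-1}\circ\mathbf{r})$, completing the chain of equalities.

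There is essentially no obstacle here; the only point requiring a moment of care is the side on which the multiplier acts. Proposition~\ref{PropInvariance} provides invariance under \emph{left} translations, which is exactly what makes $\mathbf{r}^{-1}\circ\mathbf{r}$ (and $\mathbf{s}^{-1}\circ\mathbf{s}$) collapse to $\mathbf{e}$; no right-invariance statement is needed, and one should resist the temptation to write $\mathbf{r}\circ\mathbf{r}^{-1}$ in the wrong place. I would keep the argument to these two short computations plus the cited symmetry property, presenting it in two lines.
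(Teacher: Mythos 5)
Your proposal is correct and follows exactly the paper's own route: the authors likewise obtain the corollary by choosing $\mathbf{u}=\mathbf{r}^{-1}$ or $\mathbf{u}=\mathbf{s}^{-1}$ in Equation~(\ref{left invariance}) and invoking the symmetry of $d_{C,K}$. Nothing is missing.
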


\begin{remark}
\label{RemarkAdmissibleValues1}Owing to Equation (\ref{d invariance}), all
possible values of $d_{C,K}(\mathbf{r},\mathbf{s})$ appear on the row $%
(d_{C,K}(\mathbf{e},\mathbf{u}):\mathbf{u}\in \mathrm{Sym}(L))$ of the
distance matrix.
\end{remark}

Equation (\ref{d invariance}) allows to define in $\mathrm{Sym}(L)$ an
analogue to the concept of norm in a vector space.

\begin{definition}
\label{RemarkNorm}The norm $\left\Vert \cdot \right\Vert _{C,K}$ of $\mathbf{%
r}\in \mathrm{Sym}(L)$ is defined as%
\begin{equation}
\left\Vert \mathbf{r}\right\Vert _{C,K}=d_{C,K}(\mathbf{e},\mathbf{r}).
\label{norm}
\end{equation}
\end{definition}

Then, by Equation (\ref{d invariance}), 
\begin{equation}
d_{C,K}(\mathbf{r},\mathbf{s})=\left\Vert \mathbf{r}^{-1}\circ \mathbf{s}%
\right\Vert _{C,K}=\left\Vert \mathbf{s}^{-1}\circ \mathbf{r}\right\Vert
_{C,K}.  \label{norm2}
\end{equation}

\begin{remark}
The right translation of $b\in \mathcal{G}$ by $a\in \mathcal{G}$, or the
transcript from (the source) $a$ to (the target) $b$, was defined in
Equation (\ref{Transcript}) as $R(a,b)=b\ast a^{-1}$. In view of Equation (%
\ref{norm2}), we conclude that the distance $d_{C,K}(\mathbf{r},\mathbf{s})$
is the norm $\left\Vert \cdot \right\Vert _{C,K}$ of the right translations
or transcripts $R(\mathbf{s}^{-1},\mathbf{r}^{-1})=\mathbf{r}^{-1}\circ 
\mathbf{s}$ and $\ R(\mathbf{r}^{-1},\mathbf{s}^{-1})=\mathbf{s}^{-1}\circ 
\mathbf{r}=$\ $R(\mathbf{s}^{-1},\mathbf{r}^{-1})^{-1}$.
\end{remark}

Corollary~\ref{PropInvariance2} is instrumental for the computation of the
Cayley and Kendall distances~\cite{Nguyen2024}.

\begin{proposition}
\label{PropComput dC dK}(a) Let $\mathbf{u}=(u_{1},\ldots,u_{L})\in \mathrm{Sym}%
(L)$ and $C(\mathbf{u})$ the number of cycles (including $1$-cycles) in the
cycle factorization of the permutation $\mathbf{u}$. Then,%
\begin{equation}
d_{C}(\mathbf{r},\mathbf{s})=L-C(\mathbf{r}^{-1}\circ \mathbf{s})=L-C(%
\mathbf{s}^{-1}\circ \mathbf{r})  \label{Comput d_C}
\end{equation}%
for all $\mathbf{r},\mathbf{s}\in \mathrm{Sym}(L)$

(b) Let $I(\mathbf{u})$ be the number of inversions in the permutation $%
\mathbf{u}$, i.e., the number of ordered pairs $(u_{i},u_{j})$, $1\leq
i<j\leq L$, such that $u_{i}>u_{j}$. Then,%
\begin{equation}
d_{K}(\mathbf{r},\mathbf{s})=I(\mathbf{r}^{-1}\circ \mathbf{s})=I(\mathbf{s}%
^{-1}\circ \mathbf{r})  \label{Comput d_K}
\end{equation}%
for all $\mathbf{r},\mathbf{s}\in \mathrm{Sym}(L)$.
\end{proposition}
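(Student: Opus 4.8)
The plan is to use Corollary~\ref{PropInvariance2}, which reduces both identities to the case of a single permutation $\mathbf{u}=\mathbf{r}^{-1}\circ\mathbf{s}$ (equivalently $\mathbf{s}^{-1}\circ\mathbf{r}=\mathbf{u}^{-1}$); note that $C(\mathbf{u})=C(\mathbf{u}^{-1})$ and $I(\mathbf{u})=I(\mathbf{u}^{-1})$, so the two equalities on each line are automatic, and it remains to show $d_C(\mathbf{e},\mathbf{u})=L-C(\mathbf{u})$ and $d_K(\mathbf{e},\mathbf{u})=I(\mathbf{u})$. So the whole proposition collapses to two classical facts about the word length of a permutation with respect to (i) all transpositions and (ii) adjacent transpositions.

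For part (a): First I would recall that multiplying $\mathbf{u}$ on the right by a transposition $\mathbf{t}_{ij}$ either merges two cycles of $\mathbf{u}$ into one (if $i,j$ lie in different cycles) or splits one cycle into two (if they lie in the same cycle); in both cases $C$ changes by exactly $\pm 1$. Hence $C(\mathbf{u}\circ\mathbf{t}_{i_1j_1}\circ\cdots\circ\mathbf{t}_{i_kj_k})\ge C(\mathbf{u})-k$, and since the identity $\mathbf{e}$ has $C(\mathbf{e})=L$, writing $\mathbf{u}$ as a product of $k$ transpositions forces $L\le C(\mathbf{u})+k$, i.e.\ $k\ge L-C(\mathbf{u})$; so $d_C(\mathbf{e},\mathbf{u})\ge L-C(\mathbf{u})$. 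For the reverse inequality I would exhibit an explicit factorization: an $m$-cycle $(i_1\,i_2\,\cdots\,i_m)$ equals a product of $m-1$ transpositions, e.g.\ $(i_1\,i_m)(i_1\,i_{m-1})\cdots(i_1\,i_2)$; applying this to each of the $C(\mathbf{u})$ cycles of $\mathbf{u}$ and summing $\sum(m_\ell-1)=L-C(\mathbf{u})$ gives a factorization of the required length. Combining the two bounds yields $d_C(\mathbf{e},\mathbf{u})=L-C(\mathbf{u})$.

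For part (b): I would argue that right multiplication by an adjacent transposition $\mathbf{t}_{i,i+1}$ swaps the entries in positions $i$ and $i+1$ of the one-line form (Equation~(\ref{transposition})), and this changes the number of inversions by exactly $\pm 1$ (only the relative order of the two swapped entries changes). Since $I(\mathbf{e})=0$, any factorization of $\mathbf{u}$ into $k$ adjacent transpositions gives $I(\mathbf{u})\le k$, so $d_K(\mathbf{e},\mathbf{u})\ge I(\mathbf{u})$. For the matching upper bound I would invoke the bubble-sort procedure: as long as $\mathbf{u}$ is not the identity it has a descent, i.e.\ some position $i$ with $u_i>u_{i+1}$; swapping that adjacent pair strictly decreases $I$ by $1$, and iterating reaches $\mathbf{e}$ in exactly $I(\mathbf{u})$ steps, each step recording an adjacent transposition. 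Reversing that sequence expresses $\mathbf{u}$ as a product of $I(\mathbf{u})$ adjacent transpositions, so $d_K(\mathbf{e},\mathbf{u})\le I(\mathbf{u})$, and equality follows.

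The only mildly delicate point — the ``main obstacle'' such as it is — is the cycle merge/split dichotomy in part (a): one must check carefully that right-multiplying by $\mathbf{t}_{ij}$ never leaves $C$ unchanged and never changes it by more than $1$, which is most cleanly seen by a direct case analysis on whether $i$ and $j$ lie in the same cycle of $\mathbf{u}$. Everything else is bookkeeping with Corollary~\ref{PropInvariance2} and the two explicit constructions (the $(m-1)$-transposition factorization of an $m$-cycle, and bubble sort).
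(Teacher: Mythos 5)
Your proof is correct. Note, however, that the paper itself offers no proof of Proposition~\ref{PropComput dC dK}: it states the result and defers to the literature (the sentence preceding it cites~\cite{Nguyen2024}), so there is no in-paper argument to compare yours against. What you supply is the standard textbook argument, and it is complete: the reduction via Corollary~\ref{PropInvariance2} together with $C(\mathbf{u})=C(\mathbf{u}^{-1})$ and $I(\mathbf{u})=I(\mathbf{u}^{-1})$ is legitimate; the merge/split dichotomy for right multiplication by a transposition gives the lower bound $L-C(\mathbf{u})$ and the $(m-1)$-transposition factorization of an $m$-cycle gives the matching upper bound; and for the Kendall case the $\pm 1$ change of $I$ under an adjacent swap (consistent with Equation~(\ref{transposition})) plus bubble sort gives both bounds. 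The one point you flag as delicate --- that a transposition never leaves the cycle count unchanged --- is indeed the only step needing the explicit case analysis you describe, and your treatment of it is sound. In short, your proposal correctly fills a gap the paper leaves to a citation.
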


From Equation (\ref{Comput d_C}), it follows
\begin{equation}
d_{C}(\mathbf{r},\mathbf{s})\in \{0,1,\ldots,d_{C,\max }(L)\},\text{\ where }%
d_{C,\max }(L)=L-1,  \label{Range d_C}
\end{equation}%
and, according to Equation (\ref{Lmax}), 
\begin{equation}
d_{K}(\mathbf{r},\mathbf{s})\in \left\{ 0,1,\ldots,d_{K,\max }(L)\right\} ,%
\text{\ where }d_{K,\max }(L)=\frac{L(L-1)}{2}.  \label{Range d_K}
\end{equation}

\begin{example}
We illustrate Proposition~\ref{PropComput dC dK} with $L=6$, $\mathbf{r}%
=462531$ and $\mathbf{s} = 236514$. Then,%
\begin{equation*}
\mathbf{s}^{-1}=\left( 
\begin{array}{cccccc}
1 & 2 & 3 & 4 & 5 & 6 \\ 
2 & 3 & 6 & 5 & 1 & 4%
\end{array}%
\right) ^{-1}=\left( 
\begin{array}{cccccc}
1 & 2 & 3 & 4 & 5 & 6 \\ 
5 & 1 & 2 & 6 & 4 & 3%
\end{array}%
\right) ,
\end{equation*}%
so that%
\begin{equation*}
\mathbf{s}^{-1}\circ \mathbf{r}=512643\circ 462531=631425,
\end{equation*}%
whose cycle factorization is 
\begin{equation*}
\mathbf{s}^{-1}\circ \mathbf{r}=(16523)(4).
\end{equation*}%
According to Equation (\ref{Comput d_C}), 
\begin{equation}
d_{C}(\mathbf{r},\mathbf{s})=L-C(\mathbf{s}^{-1}\circ \mathbf{r})=6-2=4.
\label{Ex_dC}
\end{equation}%
As for Equation (\ref{Comput d_K}), the inversions of $\mathbf{s}^{-1}\circ 
\mathbf{r}$ are 
\begin{eqnarray*}
&&(6,3),(6,1),(6,4),(6,2),(6,5), \\
&&(3,1),(3,2), \\
&&(4,2),
\end{eqnarray*}%
so that, 
\begin{equation}
d_{K}(\mathbf{r},\mathbf{s})=I(\mathbf{s}^{-1}\circ \mathbf{r})=8.
\label{EX_dK}
\end{equation}%
Let us check the results (\ref{Ex_dC}) and (\ref{EX_dK}). First, the
transpositions needed to transform $\mathbf{r}$ into $\mathbf{s}$ are the
following:%
\begin{equation*}
\begin{tabular}{lllll}
$\mathbf{r}=\mathbf{4}6\mathbf{2}531$ & $\overset{(13)}{\longrightarrow }$ & 
$2\mathbf{6}45\mathbf{3}1$ & $\overset{(25)}{\longrightarrow }$ & $23\mathbf{%
4}5\mathbf{6}1$ \\ 
& $\overset{(35)}{\longrightarrow }$ & $2365\mathbf{41}$ & $\overset{(56)}{%
\longrightarrow }$ & $236514=\mathbf{s}$%
\end{tabular}%
\end{equation*}%
where the elements being swapped in each transposition have been boldfaced.
Therefore, $d_{C}(\mathbf{r},\mathbf{s})=4$. To check Equation (\ref{EX_dK}%
), call $\delta _{1}$ the number of adjacent transpositions needed to move
in $\mathbf{r}$ the symbol $2$ (the first or leftmost symbol of the target $%
\mathbf{s}$) to the first position; call $\mathbf{r}^{(1)}$ the result.
Similarly, call $\delta _{2}$ the number of adjacent transpositions needed
to move in $\mathbf{r}^{(1)}$ the symbol $3$ (the second symbol of the
target $\mathbf{s}$) to the second position. Proceed analogously until $%
\mathbf{r}^{(k)}=\mathbf{s}$. The adjacent transpositions needed to
transform $\mathbf{r}$ into $\mathbf{s}$ in this example are the following:
\begin{equation*}
\begin{tabular}{lllllll}
$\mathbf{r}=46\mathbf{2}531$ & $\overset{\delta _{1}=2}{\longrightarrow }$ & 
$\mathbf{r}^{(1)}=2465\mathbf{3}1$ & $\overset{\delta _{2}=3}{%
\longrightarrow }$ & $\mathbf{r}^{(2)}=234\mathbf{6}51$ &  &  \\ 
& $\overset{\delta _{3}=1}{\longrightarrow }$ & $\mathbf{r}^{(3)}=2364%
\mathbf{5}1$ & $\overset{\delta _{4}=1}{\longrightarrow }$ & $\mathbf{r}%
^{(4)}=23654\mathbf{1}$ & $\overset{\delta _{5}=1}{\longrightarrow }$ & $%
\mathbf{r}^{(5)}=236514=\mathbf{s}$%
\end{tabular}
\end{equation*}%
where the element of $\mathbf{r}^{(i)}$ ($\mathbf{r}^{(0)}:=\mathbf{r}$)
being moved to the $(i+1)$-position has been boldfaced. This shows that $d_{K}(\mathbf{r},\mathbf{s})=\delta _{1}+\ldots+\delta _{5}=8$.
\end{example}

\begin{example}
\label{ExampleSym(3)}{According to~\cite{Bandt2019}, $\mathcal{G}=\mathrm{Sym}(3)$ is the most
common ordinal representation in data analysis.} The Cayley and Kendall distance matrices for the group $%
\mathrm{Sym}(3)$, Equation (\ref{mult table Sym(3)}), are shown in the tables
 
\begin{equation}
\begin{tabular}{|c||c|c|c|c|c|c|}
\hline
$d_{C}(\mathbf{r},\mathbf{s})$ & $123$ & $132$ & $213$ & $231$ & $312$ & $%
321 $ \\ \hline\hline
$123$ & $0$ & $1$ & $1$ & $2$ & $2$ & $1$ \\ \hline
$132$ & $1$ & $0$ & $2$ & $1$ & $1$ & $2$ \\ \hline
$213$ & $1$ & $2$ & $0$ & $1$ & $1$ & $2$ \\ \hline
$231$ & $2$ & $1$ & $1$ & $0$ & $2$ & $1$ \\ \hline
$312$ & $2$ & $1$ & $1$ & $2$ & $0$ & $1$ \\ \hline
$321$ & $1$ & $2$ & $2$ & $1$ & $1$ & $0$ \\ \hline
\end{tabular}
\label{d_C(Sym(3))}
\end{equation}%
and 
\begin{equation}
\begin{tabular}{|c||c|c|c|c|c|c|}
\hline
$d_{K}(\mathbf{r},\mathbf{s})$ & $123$ & $132$ & $213$ & $231$ & $312$ & $%
321 $ \\ \hline\hline
$123$ & $0$ & $1$ & $1$ & $2$ & $2$ & $3$ \\ \hline
$132$ & $1$ & $0$ & $2$ & $3$ & $1$ & $2$ \\ \hline
$213$ & $1$ & $2$ & $0$ & $1$ & $3$ & $2$ \\ \hline
$231$ & $2$ & $3$ & $1$ & $0$ & $2$ & $1$ \\ \hline
$312$ & $2$ & $1$ & $3$ & $2$ & $0$ & $1$ \\ \hline
$321$ & $3$ & $2$ & $2$ & $1$ & $1$ & $0$ \\ \hline
\end{tabular}
\label{d_K(Sym(3))}
\end{equation}%
As shown in Equations (\ref{d_C <= d_K}), (\ref{Range d_C}) and (\ref
{Range d_K}), $d_{C}(\mathbf{r},\mathbf{s})\leq d_{K}(\mathbf{r},\mathbf{s})$
for all $\mathbf{r},\mathbf{s\in }\mathrm{Sym}(3)$, $d_{C}(\mathbf{r},%
\mathbf{s})\in \{0,1,2\}$ and $d_{K}(\mathbf{r},\mathbf{s})\in \{0,1,2,3\}$.

{Owing to their large size, the Cayley and Kendall distance matrices for $%
\mathcal{G}=\mathrm{Sym}(4)$ have been moved to Appendix \ref{appendixa}. In this case, $%
d_{C}(\mathbf{r},\mathbf{s})\in \{0,1,2,3\}$, and $d_{K}(\mathbf{r},\mathbf{s%
})\in \{0,1,2,3,4,5,6\}$. Needless to say, the distances $d_{K}(\mathbf{r},%
\mathbf{s})$ in tables (\ref{d_K(Sym(3))}) and \ref{tab:table02} can be easily checked
in the corresponding adjacency graphs, Figures~\ref{Figure1} and \ref%
{Figure5}, where each link stands for distance 1.}

\end{example}


\section{Distances for General Groups}

\label{sec:DistforGroups}

In the first part of this section, we harness Cayley's theorem to transport
the Cayley and Kendall distances in $\mathrm{Sym}(L)$ (or, for that matter,
any distance defined in $\mathrm{Sym}(L)$) to any finite group $(\mathcal{G}%
,\ast )$ with $\left\vert \mathcal{G}\right\vert =L$. In the second part, we
briefly introduce the distance with respect to a generating system. We also
discuss the advantages of the first approach as compared to the second.


\subsection{Permutation-Based Distance for Groups}

\label{sec:DistforGroupsSub1}

Let $\Phi :\mathcal{G}\rightarrow \mathcal{H}$ be Cayley's isomorphism,
where $\mathcal{H}$ is a subgroup of $\mathrm{Sym}(\mathcal{G})$ (namely, $%
\mathcal{H}=$ $\Phi (\mathcal{G})$) with $\left\vert \mathcal{H}\right\vert
=\left\vert \mathcal{G}\right\vert $). This means:

\begin{itemize}
\item[\textbf{(i)}] $\Phi (e)$ $=(1,2,\ldots,\left\vert \mathcal{G}\right\vert )$, where $e$ is the identity of $\mathcal{G}$.

\item[\textbf{(ii)}] $\Phi (a\ast b)$ $=\Phi (a)\circ \Phi (b)$ for all $a,b\in \mathcal{G}$. Hence, $\Phi (a^{-1})=\Phi (a)^{-1}$.
\end{itemize}

To endow $\mathcal{G}$ with a distance, we transport the distance $d_{C,K}(\mathbf{r},\mathbf{s})$ from the group $\Phi (\mathcal{G})\subset \mathrm{Sym}(\mathcal{G})$ to $\mathcal{G}$ and promote $\Phi $ to an isometry.

\begin{definition}
\label{DefDist(a,b)}Let $\Phi $ be the Cayley isomorphism for a finite group 
$\mathcal{G}$. Then, $D_{C,K}^{(\Phi )}$ is the distance in $\mathcal{G}$
defined as 
\begin{equation}
D_{C,K}^{(\Phi )}(a,b)=d_{C,K}(\Phi (a),\Phi (b)).  \label{dist(a,b)}
\end{equation}
\end{definition}

Therefore, $D_{C,K}^{(\Phi )}$ has the same properties as $d_{C,K}$. In
particular:

\begin{itemize}
\item \textit{Left invariance}: By Equation (\ref{left invariance}),%
\begin{equation}
D_{C,K}^{(\Phi )}(a,b)=D_{C,K}^{(\Phi )}(c\ast a,c\ast b)  \label{Left inv D}
\end{equation}%
for all $a,b,c\in \mathcal{G}$, hence,%
\begin{equation}
D_{C,K}^{(\Phi )}(a,b)=D_{C,K}^{(\Phi )}(e,a^{-1}\ast b)=D_{C,K}^{(\Phi
)}(e,b^{-1}\ast a),  \label{Left inv D 2}
\end{equation}%
where $e$ is the identity of $\mathcal{G}$.

\item \textit{Norm-based definition}: By Equation (\ref{norm2}),%
\begin{equation}
D_{C,K}^{(\Phi )}(a,b)=\left\Vert \Phi (a)^{-1}\circ \Phi (b)\right\Vert
_{C,K}=\left\Vert \Phi (b)^{-1}\circ \Phi (a)\right\Vert _{C,K},
\label{dist(a,b) 2}
\end{equation}%
where $\left\Vert \cdot \right\Vert _{C,K}$ is the Cayley/Kendall norm in $%
\mathrm{Sym}(\mathcal{G})$, i.e.,%
\begin{equation}
\left\Vert \mathbf{r}\right\Vert _{C,K}=d_{C,K}(\mathbf{e},\mathbf{r})
\label{norm C,K}
\end{equation}%
for all $\mathbf{r}\in \mathrm{Sym}(\mathcal{G})$, $\mathbf{e}$ being the
identity of $\mathrm{Sym}(\mathcal{G}).$
\end{itemize}

From Equations (\ref{d_C <= d_K}) and (\ref{dist(a,b)}), it follows%
\begin{equation}
D_{C}^{(\Phi )}(a,b)\leq D_{K}^{(\Phi )}(a,b)  \label{D_C<=D_K}
\end{equation}%
for all $a,b\in \mathcal{G}$, since $\Phi (a),\Phi (b)\in \mathrm{Sym}(%
\mathcal{G})$. Furthermore, by Equation (\ref{Range d_C}),%
\begin{equation}
D_{C}^{(\Phi )}(a,b)\in \{0,1,\ldots,D_{C,\max }^{(\Phi )}(\left\vert \mathcal{G%
}\right\vert )\},\;\text{where }D_{C,\max }^{(\Phi )}(\left\vert \mathcal{G}%
\right\vert )=\left\vert \mathcal{G}\right\vert -1,  \label{DC max}
\end{equation}%
and, by Equation (\ref{Range d_K}), 
\begin{equation}
D_{K}^{(\Phi )}(a,b)\in \left\{ 0,1,\ldots,D_{K,\max }^{(\Phi )}(\left\vert 
\mathcal{G}\right\vert )\right\} ,\;\text{where }D_{K,\max }^{(\Phi
)}(\left\vert \mathcal{G}\right\vert )=\frac{\left\vert \mathcal{G}%
\right\vert (\left\vert \mathcal{G}\right\vert -1)}{2}.  \label{DK max}
\end{equation}

\begin{remark}
\label{RemarkGaps}In the case $\mathcal{G}=\mathrm{Sym}(L)$ of Section \ref%
{sec:EditDist}, the distances $d_{C,K}(\mathbf{r},\mathbf{s})$ take on all
integer values ranging from $0$ to their respective maxima $d_{C,\max }=L-1$
(Equation (\ref{Range d_C})), and $d_{K,\max }=L(L-1)/2$ (Equation (\ref%
{Range d_K})); think of the corresponding adjacency graphs. However, this
does not happen with $D_{C,K}^{(\Phi )}(a,b)$ because $\Phi (\mathcal{G})$
is a subgroup of cardinality $\left\vert \mathcal{G}\right\vert $ of the
group $\mathrm{Sym}(\mathcal{G})$, whose cardinality is $\left\vert \mathcal{%
G}\right\vert !$, so not all possible distances can be realized (unless $%
\left\vert \mathcal{G}\right\vert =2$). We call ``forbidden
distances for $D_{C,K}^{(\Phi )}$" the values in $%
\{0,1,\ldots,D_{C,K,\max }^{(\Phi )}\}$ that are missing in the adjacency
subgraph of $\Phi (\mathcal{G})$; otherwise, they are called allowed or
admissible distances. By Equation (\ref{Left inv D 2}) (or \mbox{Remark \ref%
{RemarkAdmissibleValues1}}), the admissible distances for $D_{C,K}^{(\Phi )}$
can be read in the row $(D_{C,K}^{(\Phi )}(e,c):c\in \mathcal{G}))$ of the
distance matrix.
\end{remark}

In general, the definition (\ref{dist(a,b)}) depends on the implementation
of Cayley's isomorphism $\Phi $, e.g., whether $\Phi (a)$ is (i) a left
translation $\Lambda _{a}$ (Equation~(\ref{FreeAction})), (ii) a right
translation $R_{a}$ (Equation (\ref{Transcript})), or (iii) an adjoint
action (Equation (\ref{Adjoint})). For simplicity, we mainly use the
implementation (i), so that $\Lambda _{a}(b)$ can be read row-wise in the
multiplicaction table
 of $\mathcal{G}$ (see Equation (\ref{Lambda(a)})), in
which case we write $D_{C,K}^{(\Lambda )}$ for $D_{C,K}^{(\Phi )}$. In case
(ii), we will write $D_{C,K}^{(R)}$.

\begin{example}
\label{Example Dist for Klein group}The only non-cyclic group of order 4 is
the Klein four-group $\mathcal{K}$, defined by the multiplication table

\begin{equation}
\begin{tabular}{|c||c|c|c|c|}
\hline
$\ast $ & $e$ & $a$ & $b$ & $c$ \\ \hline\hline
$e$ & $e$ & $a$ & $b$ & $c$ \\ \hline
$a$ & $a$ & $e$ & $c$ & $b$ \\ \hline
$b$ & $b$ & $c$ & $e$ & $a$ \\ \hline
$c$ & $c$ & $b$ & $a$ & $e$ \\ \hline
\end{tabular}
\label{TableK}
\end{equation}%
so that 

\begin{equation}
\begin{tabular}{|c||c|c|c|c|}
\hline
& $e$ & $a$ & $b$ & $c$ \\ \hline\hline
$\Lambda _{e}$ & $e$ & $a$ & $b$ & $c$ \\ \hline
$\Lambda _{a}$ & $a$ & $e$ & $c$ & $b$ \\ \hline
$\Lambda _{b}$ & $b$ & $c$ & $e$ & $a$ \\ \hline
$\Lambda _{c}$ & $c$ & $b$ & $a$ & $e$ \\ \hline
\end{tabular}
\label{TableK2}
\end{equation}%
According to Equations (\ref{DC max}) and (\ref{DK max}), $D_{C}^{(\Lambda
)}(r,s)\in \{0,1,2,3\}$ and $D_{K}^{(\Lambda )}(r,s)\in \{0,1,\ldots,6\}$. From
(\ref{TableK2}) it follows 
\begin{equation}
\begin{tabular}{|c||c|c|c|c|}
\hline
$D_{C}^{(\Lambda )}$ & $e$ & $a$ & $b$ & $c$ \\ \hline\hline
$e$ & $0$ & $2$ & $2$ & $2$ \\ \hline
$a$ & $2$ & $0$ & $2$ & $2$ \\ \hline
$b$ & $2$ & $2$ & $0$ & $2$ \\ \hline
$c$ & $2$ & $2$ & $2$ & $0$ \\ \hline
\end{tabular}%
\;\;\;\;\text{and\ \ \ \ }%
\begin{tabular}{|c||c|c|c|c|}
\hline
$D_{K}^{(\Lambda )}$ & $e$ & $a$ & $b$ & $c$ \\ \hline\hline
$e$ & $0$ & $2$ & $4$ & $6$ \\ \hline
$a$ & $2$ & $0$ & $6$ & $4$ \\ \hline
$b$ & $4$ & $6$ & $0$ & $2$ \\ \hline
$c$ & $6$ & $4$ & $2$ & $0$ \\ \hline
\end{tabular}
\label{F and Dist for K}
\end{equation}%
so the forbidden values of $D_{C}^{(\Lambda )}(r,s)$ are $\{1,3\}$ and the
forbidden values of $D_{K}^{(\Lambda )}(r,s)$ are $\{1,3,5\}$. Note that $%
\mathcal{K}$ is abelian (as any group whose cardinality is the square of a
prime number) since the multiplication table
 in Equation (\ref{TableK}) is
symmetric and every element other than the identity has order $2$, i.e.,
every element is its own inverse. Therefore, 
\begin{equation*}
R_{r}(s)=s\ast r^{-1}=s\ast r=r\ast s=\Lambda _{r}(s),
\end{equation*}%
i.e., the isomorphic copies $\Lambda _{r},R_{r}\in \mathrm{Sym}(\mathcal{K})$
are the same for all $r\in \mathcal{K}$, which implies $%
D_{C,K}^{(R)}=D_{C,K}^{(\Lambda )}$. Labeling the elements $e,a,b,c$ as $%
1,2,3,4$, one can locate the four copies $\{\Lambda _{r}:r\in \mathcal{K}\}$
of the group $\mathcal{K}$ in the Kendall adjacency graph of $\mathrm{Sym}(4)$,
Figure \ref{Figure5}, and read there the distances in the right table of Equation (%
\ref{F and Dist for K}). For example,%
\begin{equation*}
D_{K}^{(\Lambda )}(a,b)=d_{K}(\Lambda _{a},\Lambda
_{b})=d_{K}(aecb,bcea)=d_{K}(2143,3412)=6.
\end{equation*}
\end{example}

As a final remark, note that when $\mathcal{G}=\mathrm{Sym}(L)$, $%
D_{C,K}^{(\Phi )}(\mathbf{r},\mathbf{s})$ does not become $d_{C,K}(\mathbf{r}%
,\mathbf{s})$, as one might think. The reason is that, in that event, $%
d_{C,K}(\mathbf{r},\mathbf{s})$ is defined on $\mathrm{Sym}(L)\times \mathrm{%
Sym}(L)$, while $D_{C,K}^{(\Phi )}(\mathbf{r},\mathbf{s})=d_{C,K}(\Phi (%
\mathbf{r}),\Phi (\mathbf{s}))$, where $d_{C,K}(\Phi (\mathbf{r}),\Phi (%
\mathbf{s}))$ is defined on $\mathrm{Sym}(\mathrm{Sym}(L))\times \mathrm{Sym}%
(\mathrm{Sym}(L))=\mathrm{Sym}(L!)\times \mathrm{Sym}(L!)$. In other terms,
the definition domain and the range of Cayley's isomorphism $\Phi :$ $%
\mathrm{Sym}(L)\rightarrow \mathrm{Sym}(L!)$ are different also in the
particular case $\mathcal{G}=\mathrm{Sym}(L)$, which prevents $\Phi $ from
becoming the identity (unless $L=2$). However, this does not prevent $%
d_{C,K}(\mathbf{r},\mathbf{s})$ and $D_{C,K}^{(\Lambda )}(\mathbf{r},\mathbf{%
s})$ from providing the same qualitative and even quantitative information, as shown in Example %
\ref{ExampleConsistency} below and Section~\ref{sec:NumSimul}. This fact
supports the consistency of our approach to group metrics based on Cayley's
isomorphism.

\begin{example}
\label{ExampleConsistency} Tables (\ref{d_C(Lr,Ls) for Sym(6)}) and (\ref%
{d_K(Lr,Ls) for Sym(6)}) below show the distances $d_{C}(\Lambda _{\mathbf{r}%
},\Lambda _{\mathbf{s}})=:D_{C}^{(\Lambda )}(\mathbf{r},\mathbf{s})$ and $%
d_{K}(\Lambda _{\mathbf{r}},\Lambda _{\mathbf{s}})=:D_{K}^{(\Lambda )}(%
\mathbf{r},\mathbf{s})$ for $\mathbf{r},\mathbf{s}\in \mathrm{Sym}(3)$, and $%
\Lambda _{\mathbf{r}}=\Phi (\mathbf{r})$, $\Lambda _{\mathbf{s}}=\Phi (\mathbf{s%
})\in \mathrm{Sym}(6)$, see table (\ref{Sym(Sym(3))F}):

\begin{equation}
\begin{tabular}{|c||c|c|c|c|c|c|}
\hline
$d_{C}(\Lambda _{\mathbf{r}},\Lambda _{\mathbf{s}})$ & $\Lambda _{123}$ & $%
\Lambda _{132}$ & $\Lambda _{213}$ & $\Lambda _{231}$ & $\Lambda _{312}$ & $%
\Lambda _{321}$ \\ \hline\hline
$\Lambda _{123}$ & $0$ & $3$ & $3$ & $4$ & $4$ & $3$ \\ \hline
$\Lambda _{132}$ & $3$ & $0$ & $4$ & $3$ & $3$ & $4$ \\ \hline
$\Lambda _{213}$ & $3$ & $4$ & $0$ & $3$ & $3$ & $4$ \\ \hline
$\Lambda _{231}$ & $4$ & $3$ & $3$ & $0$ & $4$ & $3$ \\ \hline
$\Lambda _{312}$ & $4$ & $3$ & $3$ & $4$ & $0$ & $3$ \\ \hline
$\Lambda _{321}$ & $3$ & $4$ & $4$ & $3$ & $3$ & $0$ \\ \hline
\end{tabular}
\label{d_C(Lr,Ls) for Sym(6)}
\end{equation}%

\begin{equation}
\begin{tabular}{|c||c|c|c|c|c|c|}
\hline
$d_{K}(\Lambda _{\mathbf{r}},\Lambda _{\mathbf{s}})$ & $\Lambda _{123}$ & $%
\Lambda _{132}$ & $\Lambda _{213}$ & $\Lambda _{231}$ & $\Lambda _{312}$ & $%
\Lambda _{321}$ \\ \hline\hline
$\Lambda _{123}$ & $0$ & $5$ & $5$ & $10$ & $10$ & $15$ \\ \hline
$\Lambda _{132}$ & $5$ & $0$ & $10$ & $15$ & $5$ & $10$ \\ \hline
$\Lambda _{213}$ & $5$ & $10$ & $0$ & $5$ & $15$ & $10$ \\ \hline
$\Lambda _{231}$ & $10$ & $15$ & $5$ & $0$ & $10$ & $5$ \\ \hline
$\Lambda _{312}$ & $10$ & $5$ & $15$ & $10$ & $0$ & $5$ \\ \hline
$\Lambda _{321}$ & $15$ & $10$ & $10$ & $5$ & $5$ & $0$ \\ \hline
\end{tabular}
\label{d_K(Lr,Ls) for Sym(6)}
\end{equation}%
For instance, if we encode the permutations of $\mathrm{Sym}(3)$ as 
\begin{equation}
123=1,\;132=2,\;213=3,\;231=4,\;312=5,\;321=6,  \label{shortening}
\end{equation}%
then%
\begin{equation*}
D_{C}^{(\Lambda )}(213,321)=d_{C}(\Lambda _{213},\Lambda
_{321})=d_{C}(341265,654321)=4,
\end{equation*}%
while 
\begin{equation*}
D_{K}^{(\Lambda )}(213,321)=d_{K}(\Lambda _{213},\Lambda
_{321})=d_{K}(341265,654321)=10.
\end{equation*}%
Note that if we replace $3$ by $1$ and $4$ by $2$ in Equation (\ref{d_C(Lr,Ls)
for Sym(6)}) for $d_{C}(\Lambda _{\mathbf{r}},\Lambda _{\mathbf{s}})$, then
we obtain \mbox{Equation (\ref{d_C(Sym(3))})} for $d_{C}(\mathbf{r},\mathbf{s})$.
Furthermore, if we divide $d_{K}(\Lambda _{\mathbf{r}},\Lambda _{\mathbf{s}})
$ in Equation (\ref{d_K(Lr,Ls) for Sym(6)}) by $3$, then we obtain \mbox{Equation (\ref%
{d_K(Sym(3))})} for $d_{K}(\mathbf{r},\mathbf{s})$, i.e., 
\begin{equation}
D_{K}^{(\Lambda )}(\mathbf{r},\mathbf{s})=3d_{K}(\mathbf{r},\mathbf{s})
\label{D=3d}
\end{equation}%
for all $\mathbf{r},\mathbf{s}\in \mathrm{Sym}(3)$. We conclude that the
results obtained using $d_{C,K}(\mathbf{r},\mathbf{s})$ in $\mathcal{G}=%
\mathrm{Sym}(3)$ and $D_{C,K}^{(\Lambda )}(\mathbf{r},\mathbf{s})$ in $\Phi (%
\mathcal{G})\subset \mathrm{Sym}(6)$ are equivalent. According to Equations (\ref{d_C(Lr,Ls) for Sym(6)}) and (\ref{d_K(Lr,Ls) for Sym(6)}), the allowed distances for $D_{C}^{(\Lambda )}$ are $\{0,3,4\}$ out of $\{0,1,\ldots,5\}$, while the allowed distances for $D_{K}^{(\Lambda )}$ are $%
\{0,5,10,15\}=\{5k:0\leq k\leq 3=d_{K,\max }(3)\}$ out of $\{0,1,\ldots,15\}$.
\end{example}

\subsection{Distances with Respect to a Generating Set}

\label{sec:DistwithGenerators}

For the time being, let $\mathcal{G}$ be a finite or infinite group. A
finite set $S=\{s_{1},\ldots,s_{n}\}\subset \mathcal{G}$ is a \textit{%
generating set} (or generator) of $\mathcal{G}$ if every $a\in \mathcal{G}$
can be written as a finite product of elements of $S$ and their inverses. In
particular, groups generated by a single element are called cyclic. For
example, $\{\theta ^{0},\theta ^{1},\ldots,\theta ^{n-1}\}$ endowed with $%
\theta ^{i}\ast \theta ^{j}=\theta ^{k}$, where $k=i+j$ \textrm{mod} $n$ is
a cyclic group of order $n$ with generator $S=\{\theta ^{1}\}$. The (edit)
distance (or \textit{word metric}) $d_{S}(a,b)$ between the elements $a$ and 
$b$ of a finitely generated group (in particular of a finite group) $%
\mathcal{G}$ is defined as the minimum number of elements from the
generating set $S$ needed to transform $a$ into $b$. That is, if $b=a\ast
s_{1}\ast \ldots\ast s_{k}$, where $s_{i}\in S$ (or $s_{i}^{-1}\in S$), then $%
d_{S}(a,b)$ is the smallest possible value of $k$. Therefore, the distance $%
d_{S}$ depends on the generating set $S$. In particular, if $\mathcal{G}=$ $%
\mathrm{Sym}(L)$, then the Cayley distance $d_{C}(\mathbf{r},\mathbf{s})$ of
Section~\ref{sec:EditDist} is the distance $d_{S}$ with respect to the
generating set of all transpositions, while the Kendall distance $d_{K}(%
\mathbf{r},\mathbf{s})$ is the distance $d_{S}$ with respect to the
generating set of all adjacent transpositions.

\begin{example}
\label{Example Dist for Cyclic group}For the cyclic group $\mathcal{G}%
=\{\theta ^{0},\theta ^{1},\theta ^{2},\theta ^{3}\}$\ of Example\ \ref%
{ExampleCyclicGroup}, the distances with respect to the generating set $%
S=\{\theta ^{1}\}$ are the following:%
\begin{equation}
\begin{tabular}{|c||c|c|c|c|}
\hline
$d_{S}(\theta ^{i},\theta ^{j})$ & $\theta ^{0}$ & $\theta ^{1}$ & $\theta
^{2}$ & $\theta ^{3}$ \\ \hline\hline
$\theta ^{0}$ & $0$ & $1$ & $2$ & $3$ \\ \hline
$\theta ^{1}$ & $1$ & $0$ & $1$ & $2$ \\ \hline
$\theta ^{2}$ & $2$ & $1$ & $0$ & $1$ \\ \hline
$\theta ^{3}$ & $3$ & $2$ & $1$ & $0$ \\ \hline
\end{tabular}
\label{d_S Cyclic Group}
\end{equation}%
As for the distances $D_{K}^{(\Lambda )}(\theta ^{i},\theta
^{j})=d_{K}(\Lambda _{\theta ^{i}},\Lambda _{\theta ^{j}})$, we find (see
Equation (\ref{TableF})):%
\begin{equation}
\begin{tabular}{|c||c|c|c|c|}
\hline
$D_{K}^{(\Lambda )}(\theta ^{i},\theta ^{j})$ & $\theta ^{0}$ & $\theta ^{1}$
& $\theta ^{2}$ & $\theta ^{3}$ \\ \hline\hline
$\theta ^{0}$ & $0$ & $3$ & $4$ & $3$ \\ \hline
$\theta ^{1}$ & $3$ & $0$ & $3$ & $4$ \\ \hline
$\theta ^{2}$ & $4$ & $3$ & $0$ & $3$ \\ \hline
$\theta ^{3}$ & $3$ & $4$ & $3$ & $0$ \\ \hline
\end{tabular}
\label{dist_F Cyclic Group}
\end{equation}%
For example,%
\begin{equation}
D_{K}^{(\Lambda )}(\theta ^{2},\theta ^{3})=d_{K}(\Lambda _{\theta
^{2}},\Lambda _{\theta ^{3}})=d_{K}(\theta ^{2}\theta ^{3}\theta ^{0}\theta
^{1},\theta ^{3}\theta ^{0}\theta ^{1}\theta ^{2})=3.  \label{distF}
\end{equation}%
If right translations (\ref{Transcript}) are used instead of left
translations (\ref{FreeAction}), then $D_{K}^{(R)}(\theta ^{i},\theta
^{j})=d_{K}(R_{\theta ^{i}},R_{\theta ^{j}})$ happens to be the same as in
Equation (\ref{dist_F Cyclic Group}). For example,%
\begin{equation}
D_{K}^{(R)}(\theta ^{2},\theta ^{3})=d_{K}(R_{\theta ^{2}},R_{\theta
^{3}})=d_{K}(\theta ^{2}\theta ^{3}\theta ^{0}\theta ^{1},\theta ^{1}\theta
^{2}\theta ^{3}\theta ^{0})=3.  \label{distT}
\end{equation}%
If the group elements $\theta ^{0},\theta ^{1},\theta ^{2},\theta ^{3}$ are
labeled $1,2,3,4$, respectively, then the above distances can be read in the
Kendall adjacency graph of $\mathrm{Sym}(4)$, Figure \ref{Figure5}.
For example, distances (\ref{distF}) and (\ref{distT}) read $%
d_{K}(3412,4123) $ and $d_{K}(3412,2341)$, respectively.
\end{example}

\subsection{Discussion}

\label{sec:Discussion}

When comparing the distances $D_{C,K}^{(\Phi )}(a,b)$ and $d_{S}(a,b)$ for
finite groups, a possible advantage of the former is its expediency, in the
sense that $D_{C,K}^{(\Phi )}$ dispenses with generating sets and, hence,
with the search for minimal descriptions of $b$ as products of the form $%
a\ast s_{1}\ast \ldots\ast s_{k}$. In addition, there are algorithms (such as
the bubble-sort algorithm) that compute $D_{C}^{(\Phi )}$ in time $%
O(\left\vert \mathcal{G}\right\vert )$ and $D_{K}^{(\Phi )}$ in time $\
O(\left\vert \mathcal{G}\right\vert \log \left\vert \mathcal{G}\right\vert )$
\cite{Cicirello2019}. Computational issues are briefly discussed in Section %
\ref{sec:NumSimul}.

On the other hand, a possible shortcoming of the distances $D_{C,K}^{(\Phi
)} $ in applications is the existence of forbidden values pointed out in
Remark~\ref{RemarkGaps}. For instance, the presence of such gaps in the
distances between the algebraic representations of two coupled time series
(see Section~\ref{sec:DistforTS}) might be misinterpreted as a dynamical
characteristic of the underlying systems, e.g., full or generalized
synchronization. So, the forbidden values for $D_{C,K}^{(\Phi )}$ must be
identified in advance, which can be easily done by calculating the row $%
(D_{C,K}^{(\Phi )}(e,c):c\in \mathcal{G}))$ of the distance matrix (Remark %
\ref{RemarkGaps}). Alternatively, they can be identified using independent
white noises. We come back to this point in Section~\ref{sec:NumSimul}.

In sum, when embedding a group $\mathcal{G}$ in $\mathrm{Sym}(\left\vert 
\mathcal{G}\right\vert )$ via Cayley's isomorphism $\Phi $, we are encoding
the $\left\vert \mathcal{G}\right\vert $ elements $\{a_{1},\ldots,a_{\left\vert 
\mathcal{G}\right\vert }\}\in \mathcal{G}$ as the $\left\vert \mathcal{G}%
\right\vert $ permutations $\Phi (a)=(b_{1},b_{2},\ldots,b_{\left\vert \mathcal{%
G}\right\vert })$, where $(b_{1},b_{2},\ldots,b_{\left\vert \mathcal{G}%
\right\vert })$ is a shuffle of $(a_{1},\ldots,a_{\left\vert \mathcal{G}%
\right\vert })$; see Equation (\ref{Lambda(a)}) for $\Phi $ being the left
translation $a_{i}\mapsto \Lambda _{a_{i}}$. The penalty for doing so is a
more complex representation of the elements of $\mathcal{G}$. The pay-off is a general and
computationally efficient metric $D_{C,K}^{(\Phi )}$. In principle, there
may be symmetric groups $\mathrm{Sym}(M)$ with $M<\left\vert \mathcal{G}%
\right\vert $ in which $\mathcal{G}$ can be embedded, but finding such symmetric groups, in particular, the minimum-order one, is rather difficult in general~\cite{Johnson1971,Grechkoseeva2003}. In any case, note that in the practice of symbolic representation of time series, the alphabets used have low cardinality.


\section{Distances for Group-Valued Time Series and Algebraic Representations%
}

\label{sec:DistforTS}

In this section, we explore possible applications of permutation-based
distances to group-valued time series. Examples of group-valued time series
include binary and $n$-ary time series. In the first case, $\mathcal{G=}%
\{0,1\}$, endowed with the XOR operation (addition modulo 2); these time
series arise in digital communications and cryptography. The second example
is a generalization, also used in digital communications: $\mathcal{G=}%
\{0,1,\ldots,n-1\}$ endowed with addition modulo $n$.

The perhaps most familiar example of group-valued time series is the
ordinal representation of real-value time series, introduced in Section \ref%
{sec:OrdPat}. A generalization thereof is the concept of algebraic
representation.

\begin{definition}
We say that a symbolic representation $\alpha =(a_{t})_{t\geq 0}$ of a time
series is an algebraic time series if its elements $a_{t}$ belong to a
finite group $(\mathcal{G},\ast )$.
\end{definition}

Since here we are interested in practical applications, consider two finite $%
\mathcal{G}$-valued time series $\alpha =(a_{t})_{1\leq t\leq N}$ and $\beta
=(b_{t})_{1\leq t\leq N}$ of length $N$. In time series analysis, $\alpha $
and $\beta $ could be ordinal representations of two coupled real-valued
time series $(x_{t})_{1\leq t\leq N}$ and $(y_{t})_{1\leq t\leq N}$,
respectively. To carry out a data-driven analysis of the coupled dynamics of
the underlying systems (think of various types of synchronization), or to
measure the similarity between $\alpha $ and $\beta $, there are a number of
metrics that we review in Section~\ref{sec:Metrics}. In Section \ref%
{sec:ExtractInfo}, we discuss how to extract information with those metrics.

\subsection{String Metrics for Group-Valued Time Series}

\label{sec:Metrics}

Below, we mention perhaps the most common metrics. Each of them targets
specific situations.

\begin{itemize}
\item[\textbf{(i)}] Some of the metrics to quantify the similarity of two symbolic
time series such as $\alpha $ and $\beta $ are based on the probability
distributions of their symbols (estimated by their frequencies) \cite%
{Rachev2013}. This category includes the Kullback--Leibler (KL) divergence
(usually symmetrized via an arithmetic or harmonic mean)~\cite{Monetti2009},
the Jensen--Shannon (JS) divergence~\cite{Lin1991}, the JS distance (which is
the square root of the JS divergence)~\cite{Endres2003}, the permutation JS
distance~\cite{Zunino2022,Zunino2024}, the Hellinger distance \cite%
{Hellinger1909}, the Wasserstein distance~\cite{Kantorovich1942,Figalli2021}%
, the total variation distance~\cite{Pinsker1964,Bhattacharyya2023} and
more. Since in this paper we are interested in harnessing the algebraic
structure of the symbolic data (if any), we will dispense with entropic
distances.

\item[\textbf{(ii)}] One can also exploit the algebraic structure of $\mathcal{G}$
and calculate the transcription of $\alpha $ and $\beta $~\cite{Monetti2009}%
, that is, the time series $\tau =(\tau _{t})_{t\geq 0}$, where $\tau
_{t}=b_{t}\ast a_{t}^{-1}$ (right translations by $a_{t}$) or $\tau
_{t}=a_{t}^{-1}\ast b_{t}$ (left translations by $a_{t}^{-1}$), see
Equations (\ref{Transcript}) and (\ref{FreeAction}). Trancriptions of
coupled time series in an ordinal representation have been used to study
different aspects of coupled dynamics: complexity \cite%
{Monetti2009,Amigo2012}, synchronization~\cite{Monetti2009,Amigo2012},
information directionality (or causality)~\cite{Monetti2013B}, features for
classification~\cite{Pilarczyk2023}, etc. Interestingly, if $\mathcal{G}=%
\mathrm{Sym}(L)$, then the distance between the ordinal patterns $a_{t}$ and 
$b_{t}$ can be written as the norm $\left\Vert \cdot \right\Vert _{C,K}$ of
the transcript $a_{t}^{-1}\circ b_{t}$, see Equation (\ref{norm2}).
Otherwise, we embed $\mathcal{G}$ into $\mathrm{Sym}(\mathcal{G})$ via
Cayley's isomorphism $\Phi :\mathcal{G}\rightarrow \mathrm{Sym}(\mathcal{G})$
and, again, the distance between the ordinal patterns $\Phi (a_{t})$ and $%
\Phi (b_{t})$ can be written as the norm $\left\Vert \cdot \right\Vert
_{C,K} $ of the transcript $\Phi (a_{t})^{-1}\circ \Phi (b_{t})$, see
Equation (\ref{dist(a,b) 2}).

\item[\textbf{(iii)}] Since a window $a_{t}^{W}:=a_{t},a_{t+1},\ldots,a_{t+W-1}$ of size 
$W$ of any $\mathcal{G}$-valued time series $\alpha =(a_{t})_{t\geq 0}$ can
be viewed as a string of symbols of length $W$, we can borrow a number of
string metrics from information theory, computer science and computational
linguistics to compare $a_{t}^{W}$ and $%
b_{t}^{W}:=b_{t},b_{t+1},\ldots,b_{t+W-1} $, where (unlike permutations) these
strings can have repeated symbols. Thus, the \textit{Hamming distance}
between two strings of equal length is the number of positions at which the
corresponding symbols differ~\cite{Hamming1950}. The \textit{%
Damerau--Levenshtein distance} considers insertions, deletions,
substitutions and adjacent transpositions of symbols \cite%
{Damerau1964,Levenshtein1966,Cormen2009}. Such metrics are also examples of
edit distances. Finally, we also mention the \textit{Jaro--Winkler
similarity coefficient} (not a true distance) which, like the Hamming
distance, is based on symbol matching~\cite{Jaro1989,Winkler1990}.
\end{itemize}

\subsection{Extracting Information with $d_{C,K}$ and $D_{C,K}^{(\Phi )}$}

\label{sec:ExtractInfo}

Next, we focus on the distances $d_{C,K}$ for the group $\mathrm{Sym}(L)$
(Section~\ref{sec:EditDist}) and $D_{C,K}^{(\Phi )}$ for other groups
(Section~\ref{sec:EditDist}) and their applications to the analysis of $%
\mathcal{G}$-valued time series and algebraic representations. The idea is
to measure the distance between (A) simultaneous symbols $a_{t}$ and $b_{t}$%
, or (B) concurrent windows $a_{t}^{W}$ and $b_{t}^{W}$, and thereby
characterize the similarity or dissimilarity of the symbolic time series $%
\alpha $ and $\beta $. To this end, we consider sliding windows $a_{t}^{W}$
and $b_{t}^{W}$, $1\leq t\leq N-W+1$, with the same size $W\geq 1$, where we
allow $W=1$ in order to include distances between simultaneous symbols.

\begin{enumerate}[label=,leftmargin=4.3em,labelsep=8mm]
\item[\textbf{\mbox{CASE A:}}]  $W=1.$   To unify the notation, we will write $\mathrm{%
dist}(a_{t},b_{t})$ for the distance between the elements $a_{t},b_{t}\in 
\mathcal{G}$, with the understanding that $\mathrm{dist}%
(a_{t},b_{t})=d_{C,K}(a_{t},b_{t})$ if $\mathcal{G}=\mathrm{Sym}(L)$ and $%
\mathrm{dist}(a_{t},b_{t})=D_{C,K}^{(\Phi )}(a_{t},b_{t})$ otherwise.
Therefore, 
\begin{equation}
\mathrm{dist}(a_{t},b_{t})\in \{0,1,\ldots,\mathrm{dist}_{\max }\},
\label{Upper bounds}
\end{equation}%
where 
\begin{equation}
\mathrm{dist}_{\max }\left\{ 
\begin{tabular}{ll}
$=L-1$ & if $\mathrm{dist}=d_{C}$ (Equation (\ref{Range d_C})), \\ 
$=L(L-1)/2$ & if $\mathrm{dist}=d_{K}$ (Equation (\ref{Range d_K})), \\ 
$\leq \left\vert \mathcal{G}\right\vert -1$ & if $\mathrm{dist}=D_{C}^{(\Phi
)}$ (Equation (\ref{DC max})), \\ 
$\leq \left\vert \mathcal{G}\right\vert (\left\vert \mathcal{G}\right\vert
-1)/2$ & if $\mathrm{dist}=D_{K}^{(\Phi )}$ (Equation (\ref{DK max})).%
\end{tabular}%
\right.  \label{dist_max}
\end{equation}%
where the inequalities in Equation (\ref{dist_max}) allow for the
possibility that $D_{C,K,\max }^{(\Phi )}$ is a forbidden distance (Remark %
\ref{RemarkGaps}). As a result of calculating $\mathrm{dist}(a_{t},b_{t})$
for $1\leq t\leq N$, we obtain the integer-valued time series%
\begin{equation}
(\mathrm{dist}(a_{t},b_{t}))_{1\leq t\leq N}.  \label{dist TS}
\end{equation}%
According to Equation (\ref{dist_max}), $d_{C,\max }<d_{K,\max }$ and $%
D_{C,\max }^{(\Phi )}<D_{K,\max }^{(\Phi )}$, except for $L=2.$ Therefore, $%
d_{K}$ and $D_{K}^{(\Phi )}$ have greater differentiating power in
applications than their Cayley counterparts due to their larger ranges.

\item[\textbf{\mbox{CASE B:}}]  $W>1$. Consider now the windows $%
a_{t}^{W}=(a_{t},a_{t+1},\ldots,a_{t+W-1})$ and $%
b_{t}^{W}=(b_{t},b_{t+1},\ldots,b_{t+W-1})$ as $W$-dimensional vectors in the
corresponding Cartesian product of the metric space $(\mathcal{G},\mathrm{%
dist})$. In this case, we have the whole family of $l_{p}$ distances, $p\geq
1$, at our disposal. Well-known instances include the so-called \textit{%
Manhattan distance},%
\begin{equation}
\mathrm{dist}_{1}(a_{t}^{W},b_{t}^{W})=\sum_{k=0}^{W-1}\mathrm{dist}%
(a_{t+k},b_{t+k}),  \label{Manhattan dist}
\end{equation}%
the \textit{Euclidean distance},%
\begin{equation}
\mathrm{dist}_{2}(a_{t}^{W},b_{t}^{W})=\left( \sum_{k=0}^{W-1}\mathrm{dist}%
(a_{t+k},b_{t+k})^{2}\right) ^{1/2},  \label{Euclidean dist}
\end{equation}%
and the \textit{Chebychev distance},%
\begin{equation}
\mathrm{dist}_{\infty }(a_{t}^{W},b_{t}^{W})=\max \left\{ \mathrm{dist}%
(a_{t+k},b_{t+k}):0\leq k\leq W-1\right\} .  \label{Chebyshev dist}
\end{equation}%
As a result, we obtain the time series%
\begin{equation}
(\mathrm{dist}_{p}(a_{t}^{W},b_{t}^{W}))_{1\leq t\leq N-W+1}
\label{dist_p TS}
\end{equation}%
which is integer-valued for $=1,\infty $, and real-valued otherwise.

\end{enumerate}

Once the metric information from the $\mathcal{G}$-valued time series $%
\alpha $ and $\beta $ has been collected element-wise (\ref{dist TS})
and/or window-wise (\ref{dist_p TS}), one can proceed in several ways to
process the information. We discuss some simple ways in Section \ref%
{sec:NumSimul}.


\section{Numerical Simulations}

\label{sec:NumSimul}

In this section, we illustrate the application of permutation-based distances
to algebraic representations with numerical simulations. To this end, we
revisit a model composed of two unidirectionally coupled, non-identical
Henon systems, used in~\cite{Amigo2024} to study generalized
synchronization. The equations of the driver $X$ are%
\begin{equation}
\left\{ 
\begin{array}{l}
x_{t+1}^{(1)}=1.4-(x_{t}^{(1)}{})^{2}+0.1x_{t}^{(2)} \\ 
x_{t+1}^{(2)}=x_{t}^{(1)}%
\end{array}%
\right.  \label{HenonMapX}
\end{equation}%
and the equations of the responder $Y$ are 
\begin{equation}
\left\{ 
\begin{array}{l}
y_{t+1}^{(1)}=1.4-[Cx_{t}^{(1)}y_{t}^{(1)}+(1-C)(y_{t}^{(1)}{})^{2}]+0.3y_{t}^{(2)}
\\ 
y_{t+1}^{(2)}=y_{t}^{(1)}%
\end{array}%
\right.  \label{HenonMap2}
\end{equation}%
where $C\geq 0$ is the \textit{coupling strength}. It is numerically proved
in~\cite{Amigo2024} that this system has \textit{generalized synchronization%
} for $C$ in a small interval around $0.55$ and for $C\gtrsim 1$ (Figure 3 of \cite{Amigo2024}).

For a given coupling strength $C$, let $x=(x_{t}^{(1)})_{1\leq t\leq 10000}$
and $y=(y_{t}^{(1)})_{1\leq t\leq 10000}$ be two stationary time series of
length $N=10,000$ composed of the first components of the states $%
x_{t}=(x_{t}^{(1)},x_{t}^{(2)})$ of the driver and $%
y_{t}=(y_{t}^{(1)},y_{t}^{(2)})$ of the responder, respectively, and
generated with seeds $x_{0}=(0,0.9)$ and $y_{0}=(0.75,0)$ (after discarding
the initial transient). Let $\alpha =(\mathbf{r}_{t})_{1\leq t\leq
10000-L+1} $ and $\beta =(\mathbf{s}_{t})_{1\leq t\leq 10000-L+1}$ be the
algebraic representations of $x$ and $y$ with ordinal patterns of length $%
3\leq L\leq 6 $. The values chosen for the coupling strength are $%
C=0.30,0.55,1.10$.

Next we computed different types of distances between $\alpha $ and $\beta $
from those presented in Section~\ref{sec:ExtractInfo}. Here, we present only
the results with the Kendall distances $d_{K}(\mathbf{r}_{t},\mathbf{s}_{t})$
and $D_{K}^{(\Lambda )}(\mathbf{r}_{t},\mathbf{s}_{t})$ because, as
explained there, they have greater differentiating power than $d_{C}$ and $%
D_{C}^{(\Lambda )}$. As for the distances $\mathrm{dist}_{p}(\mathbf{r}%
_{t}^{W},\mathbf{s}_{t}^{W})$, we used $p=1,2,\infty $ (\mbox{Equations (\ref%
{Manhattan dist})--(\ref{Chebyshev dist})}). Irrational values of $\mathrm{dist%
}_{2}(\mathbf{r}_{t}^{W},\mathbf{s}_{t}^{W})$ were rounded to the integer $n$
if $\mathrm{dist}_{2}(\mathbf{r}_{t}^{W},\mathbf{s}_{t}^{W})\in
(n-0.5,\,n+0.5]$. To facilitate analysis, we transformed the data $(d_{K}(%
\mathbf{r}_{t},\mathbf{s}_{t}))_{1\leq t\leq N-L+1}$, $(D_{K}^{(\Lambda )}(%
\mathbf{r}_{t},\mathbf{s}_{t}))_{1\leq t\leq N-L+1}$ and $(\mathrm{dist}%
_{p}(\mathbf{r}_{t}^{W},\mathbf{s}_{t}^{W}))_{1\leq t\leq N-L-W+2}$ into (empirical)
probability distributions for the distance values.

Figure~\ref{Figure2} illustrates CASE A of Section~\ref{sec:ExtractInfo},
i.e., $W=1$. Here, $\mathcal{G}=\mathrm{Sym}(4)$ (top row) and $\mathcal{G}=%
\mathrm{Sym}(5)$ (bottom row). The main conclusions can be summarized as
follows.

\begin{itemize}
\item For $C=0.30$ (no synchronization, panels (a) and (d)), all possible
values $\{0,1,\ldots,L(L-1)/2\}$ of $d_{K}$ are realized.

\item For $C=0.55$ (``weak synchronization'', panels (b) and (e)), only the greater values of $d_{K}$ are allowed.

\item For $C=1.10$ (``strong
synchronization'', panels (c) and (f)), only the smaller
values of $d_{K}$ are allowed.

\item So, $d_{K}$ detects that the generalized synchronizations at $C=0.55$
and $C=1.10$ are different: the former forbids the shorter distances between
simultaneous ordinal patterns $\mathbf{r}_{t}$ and $\mathbf{s}_{t}$, while
the latter forbids large distances.

\item The results for each $C$ are consistently similar.
\end{itemize}

We conclude that the distance $d_{K}$ is sensitive to dynamical changes in
coupled systems and robust with respect to the length of the ordinal
patterns.
\begin{figure}[H]
\centering
\includegraphics[scale=0.58]{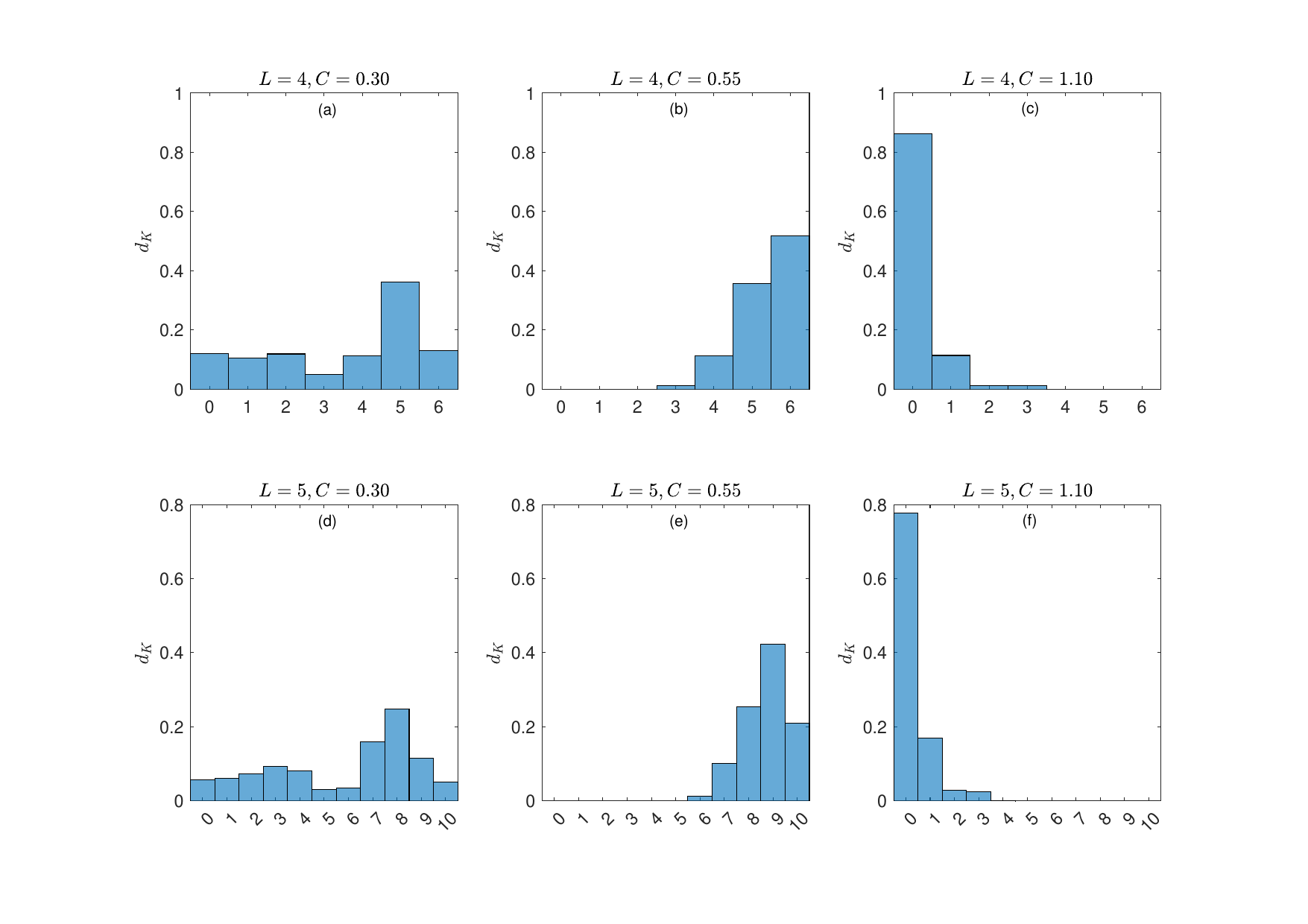}
\caption{\textbf{Top row}: Probability distributions of the Kendall distances $d_{K}(\mathbf{r}_{t},\mathbf{s}_{t})$ for the algebraic representation of the time series $x$ and $y$ with the group $\mathcal{G}=\mathrm{Sym}(4)$ (i.e., ordinal patterns of length $L=4$) and coupling strengths $C=0.30$ (\textbf{left panel}), $0.55$ (\textbf{middle panel}) and $1.10$ (\textbf{right panel}). \textbf{Bottom row}: Same as top row for the representation group $\mathcal{G}=\mathrm{Sym}(5)$ (i.e., ordinal patterns of length $L=5$).}
\label{Figure2}
\end{figure}
At this point, we draw on Figure~\ref{Figure2} to, as in Example \ref%
{ExampleConsistency}, check the consistency of the results obtained with $%
d_{K}$ and $D_{K}^{(\Lambda )}$, this time using $\mathcal{G}=\mathrm{Sym}(4)
$ and $\mathcal{G}=\mathrm{Sym}(5)$. Figure~\ref{Figure3} shows the
probability distribution of the allowed distances for $D_{K}^{(\Lambda )}(%
\mathbf{r}_{t},\mathbf{s}_{t})$ with $L=4$ (panel (a)) and $L=5$ (panel (b)). The
coupling strength in both panels is $C=0.30$, so that all $L(L-1)/2+1$
allowed distances are realized. The allowed distances for $D_{K}^{(\Lambda
)}(\mathbf{r}_{t},\mathbf{s}_{t})$, listed along the horizontal axes in
Figure~\ref{Figure3}, happen to be $\{46k:0\leq k\leq 6=d_{K,\max }(4)\}$
for $L=4$ and $\{714k:0\leq k\leq 10=d_{K,\max }(5)\}$ for $L=5$. Comparison
of panels (a) and (b) of Figure~\ref{Figure3} with panels (a) and (d) of
Figure~\ref{Figure2}, respectively, shows that the probability distributions
of $D_{K}^{(\Lambda )}(\mathbf{r}_{t},\mathbf{s}_{t})$ and $d_{K}(\mathbf{r}%
_{t},\mathbf{s}_{t})$ are exactly the same for $L=4,5$ and $C=0.30$, except
for the labeling of the distances; notice the change of scales. {In fact, and similarly to Equation (\ref{D=3d}), numerical calculations show
that (i)%
\begin{equation}
D_{K}^{(\Lambda )}(\mathbf{r},\mathbf{s})=46d_{K}(\mathbf{r},\mathbf{s})
\label{D=46d}
\end{equation}%
for all $\mathbf{r},\mathbf{s\in }\mathrm{Sym}(4)$, where $46=\min
\{d_{K}(\Lambda _{\mathbf{r}},\Lambda _{\mathbf{s}})>0:\mathbf{r},\mathbf{%
s\in }\mathrm{Sym}(4)\}$, and (ii)%
\begin{equation}
D_{K}^{(\Lambda )}(\mathbf{r},\mathbf{s})=714d_{K}(\mathbf{r},\mathbf{s})
\label{D=714d}
\end{equation}%
for all $\mathbf{r},\mathbf{s\in }\mathrm{Sym}(5)$, where $714=\min
\{d_{K}(\Lambda _{\mathbf{r}},\Lambda _{\mathbf{s}})>0:\mathbf{r},\mathbf{%
s\in }\mathrm{Sym}(5)\}$. For example, $d_{K}(\Lambda _{1234},\Lambda
_{1243})=46$ and $d_{K}(\Lambda _{12345},\Lambda _{12354})=714$.} The same
occurs for $C=0.55$ and $C=1.10$ (not shown).

\begin{figure}[H]

\centering
\includegraphics[scale=0.4]{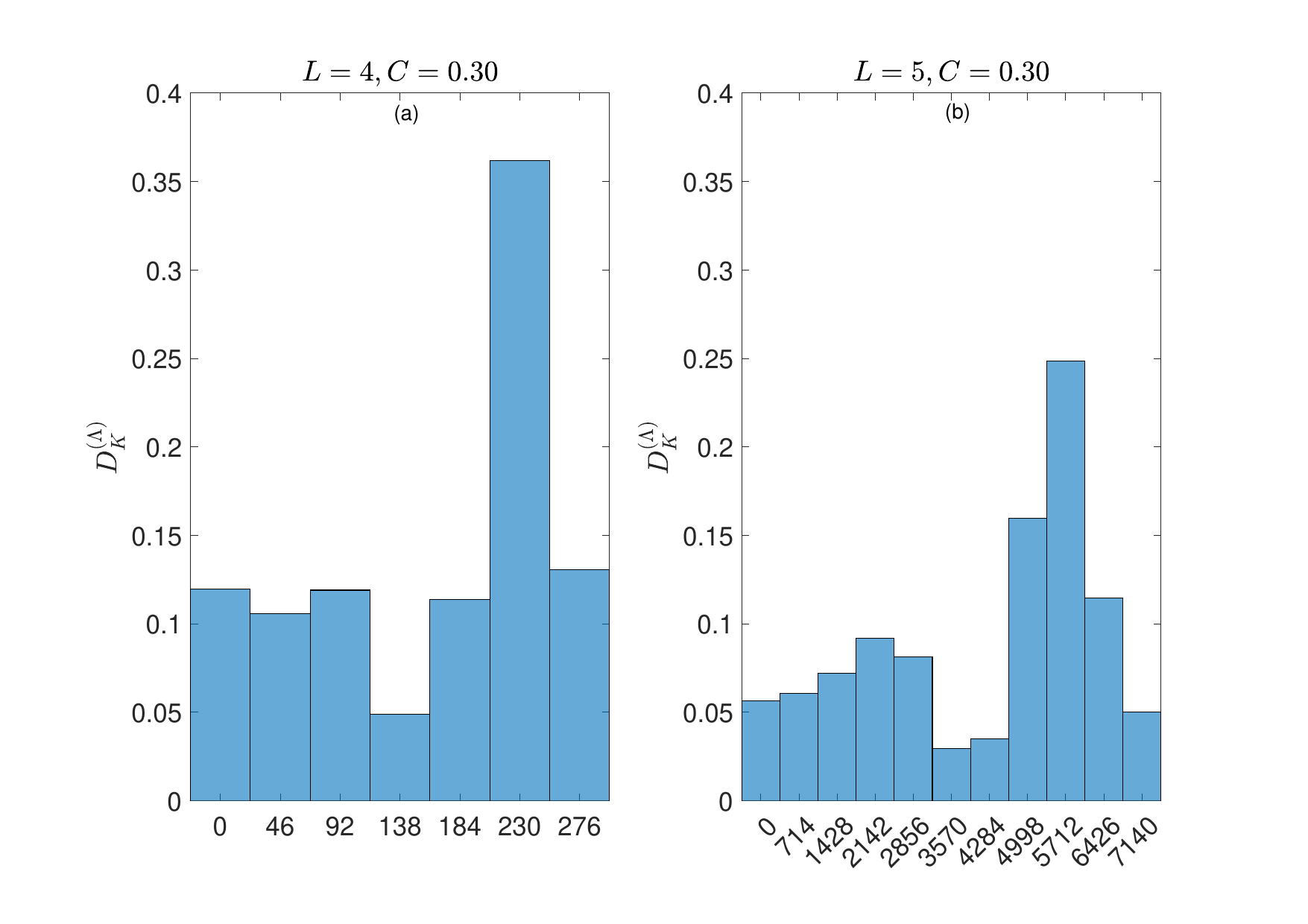}
\caption{Probability distributions of the allowed distances for $%
D_{K}^{(\Lambda )}(\mathbf{r}_{t},\mathbf{s}_{t})$ for the algebraic
representation of the time series $x$ and $y$ with the group $\mathcal{G}=%
\mathrm{Sym}(4)$ (panel (\textbf{a})) and $\mathcal{G}=\mathrm{Sym}(5)$ (panel (\textbf{b})). $%
C=0.30$ in both panels so that all $L(L-1)/2+1$ allowed distances for $%
D_{K}^{(\Lambda )}(\mathbf{r}_{t},\mathbf{s}_{t})$ (listed along the
horizontal axes) are actually realized.}
\label{Figure3}
\end{figure}

Figure~\ref{Figure4} illustrates CASE B of Section~\ref{sec:ExtractInfo},
i.e., $W>1$. Here, $W=4$ with $\mathcal{G}=\mathrm{Sym}(3)$, $\mathrm{dist}(%
\mathbf{r}_{t},\mathbf{s}_{t})=d_{K}(\mathbf{r}_{t},\mathbf{s}_{t})$, and
the distance $\mathrm{dist}_{p}(\mathbf{r}_{t}^{4},\mathbf{s}_{t}^{4})$ is
(i) $\mathrm{dist}_{1}(\mathbf{r}_{t}^{4},\mathbf{s}_{t}^{4})\in
\{0,1,\ldots,12\}$ in the top row, (ii) $\mathrm{dist}_{2}(\mathbf{r}_{t}^{4},%
\mathbf{s}_{t}^{4})\in \lbrack 0,6]$ in the middle row and (iii) $\mathrm{%
dist}_{\infty }(\mathbf{r}_{t}^{4},\mathbf{s}_{t}^{4})\in \{0,1,2,3\}$ in
the bottom row. The main conclusions can be summarized as follows.

\begin{itemize}
\item Due to the monotony property of the $p$-norms ($\left\Vert \cdot
\right\Vert _{p}\geq \left\Vert \cdot \right\Vert _{p^{\prime }}$ for $1\leq
p\leq $ $p^{\prime }\leq \infty $), the distances with smaller parameters $p$
($\mathrm{dist}_{1}$ and $\mathrm{dist}_{2}$ in Figure \ref{Figure4}) have greater
differentiating power.

\item The results shown in Figure \ref{Figure4} (obtained with sliding windows of 4
consecutive ordinal patterns of length 3) are qualitatively similar to the
results shown in Figure \ref{Figure2} (obtained with simultaneous pairs $(\mathbf{r}_{t},%
\mathbf{s}_{t})$ of ordinal patterns of lengths 4 and 5).
\end{itemize}

We conclude that the distances $\mathrm{dist}_{p}$ are also sensitive to
dynamical changes in coupled systems and robust with respect to the
parameter $p\geq 1$.

To wrap up the previous discussion, we are also going to compare the
computational times of $D_{C,K}^{(\Lambda )}(a_{t},b_{t})$ (Section \ref%
{sec:DistforGroupsSub1}) and $d_{S}(a_{t},b_{t})$ (Section \ref%
{sec:DistwithGenerators}), where $(a_{t})_{1\leq t\leq N}$, $(b_{t})_{1\leq
t\leq N}$ are $\mathcal{G}$-valued time series. Rather than using ad hoc
groups and coupled time series, we take advange of the above ordinal
representations $\alpha $ and $\beta $, and benchmark the computational cost
of computing $D_{C,K}^{(\Lambda )}(\mathbf{r}_{t},\mathbf{s}_{t})$ for $%
\mathcal{G}=\mathrm{Sym}(L)$, $3\leq L\leq 6$ (the usual ordinal pattern lengths in
applications), $N=$ 10,000 and $C=0.30$, against the computational cost of
calculating $d_{K}(\mathbf{r}_{t},\mathbf{s}_{t})$ for the same group and
settings. We choose $C=0.30$ so that all allowed ordinal patterns are
realized (see Figures~\ref{Figure2} and~\ref{Figure3}). Table \ref%
{TableTimes} shows the times in seconds of the corresponding calculations
with a laptop (Intel I9 processor, 8 cores, 64 GB of RAM, 8 GB  of GPU
memory) and a non-paralellized algorithm.

Altogether, the above numerical results support the usefulness of distances $%
d_{C,K}$, $D_{C,K}^{(\Phi )}$ and $\mathrm{dist}_{p}$ in the analysis of
group-valued time series.

\begin{table}[H]
\centering
\caption{Computation time in seconds of $d_{K}(\mathbf{r}_{t},\mathbf{s}_{t})$ and $D_{C,K}^{(\Lambda )}(\mathbf{r}_{t},\mathbf{s}_{t})$, $1\leq t\leq 10,000$.}\label{TableTimes}
\begin{tabular*}{\linewidth}{@{\extracolsep{\fill}} cccc}
\toprule
\boldmath{$\mathcal{G}$} & \boldmath{$\Phi (\mathcal{G})$} & \boldmath{$d_{K}(\mathbf{r}_{t},\mathbf{s}_{t})$} & \boldmath{$D_{C,K}^{(\Lambda )}(\mathbf{r}_{t},\mathbf{s}_{t})$}\\
\midrule
$\mathrm{Sym}(3)$ & $\mathrm{Sym}(6)$ & $0.009$ s & $0.022$ s \\ 
$\mathrm{Sym}(4)$ & $\mathrm{Sym}(24)$ & $0.010$ s & $0.038$ s \\ 
$\mathrm{Sym}(5)$ & $\mathrm{Sym}(120)$ & $0.011$ s & $0.143$ s \\ 
$\mathrm{Sym}(6)$ & $\mathrm{Sym}(720)$ & $0.012$ s & $2.837$ s \\  
\bottomrule
\end{tabular*}
\end{table}

\begin{figure}[H]
\centering
\includegraphics[scale=0.70]{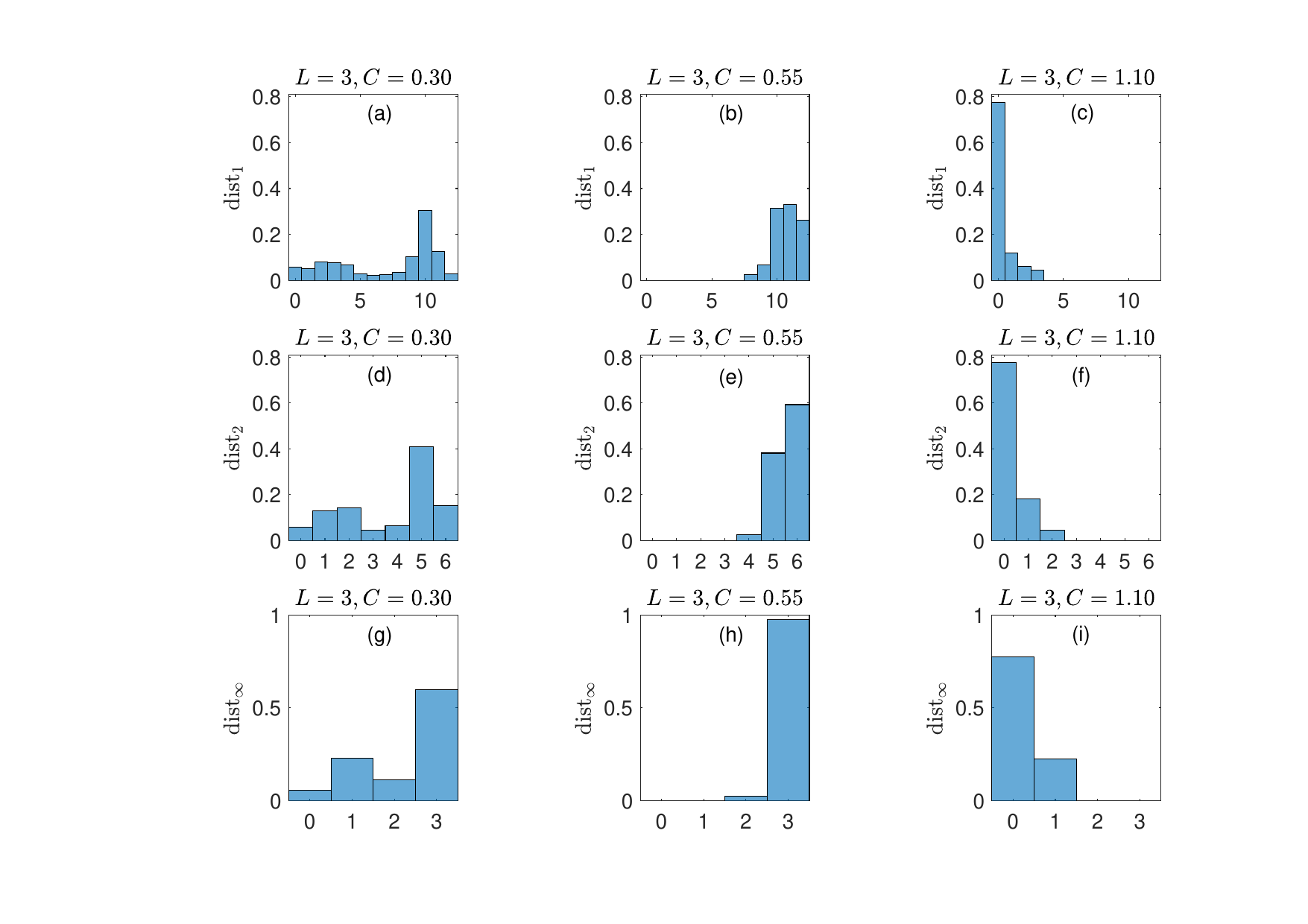}
\caption{\textbf{Top row}: Probability distributions of the distance $\mathrm{dist}%
_{1}(\mathbf{r}_{t}^{4},\mathbf{s}_{t}^{4})$ for the algebraic
representation of the time series $x$ and $y$ with the group $\mathcal{G}=%
\mathrm{Sym}(3)$ (ordinal patterns of length $L=3$) and coupling strengths $%
C=0.30$ (\textbf{left panel}), $0.55$ (\textbf{middle panel}) and $1.10$ (\textbf{right panel}). \textbf{Middle
row}: Same as top row for the distance $\mathrm{dist}_{2}(\mathbf{r}_{t}^{4},%
\mathbf{s}_{t}^{4})$. \textbf{Bottom row}: Same as top row for the distance $\mathrm{%
dist}_{\infty }(\mathbf{r}_{t}^{4},\mathbf{s}_{t}^{4})$.}
\label{Figure4}
\end{figure}


\section{Conclusions}

\label{sec:Conclusion}

The results presented in this paper are an outgrowth of the study of
transcripts and their applications to time series analysis in algebraic representations (Section~\ref{sec:DistforTS}), which are a generalization of transcripts in ordinal representations~\cite{Monetti2009}. Indeed, the concept of transcript from a group element $a\in \mathcal{G}$ to another $b\in \mathcal{G}$ or, for that matter, the right translation of $b$ by $a$
(Equation (\ref{Transcript})) leads directly to the isomorphism $\Phi
:   a\mapsto R(a,\cdot )=:R_{a}$ from $\mathcal{G}$ to a subgroup of the
symmetric group $\mathrm{Sym}(\mathcal{G})$ (Cayley's theorem). In turn, the elements of $\mathrm{Sym}(\mathcal{G})$ can be written as numerical or symbolic strings, which allows us to endow $\mathrm{Sym}(\mathcal{G})$ with any convenient edit distance, e.g., the Cayley distance $d_{C}$ or the Kendall distance $d_{K}$ of Section~\ref{sec:OrdPatyEditDist}. This being the case, the isomorphism $\Phi $ can be used to transport the distance $d_{C,K}$ in $%
\mathrm{Sym}(\mathcal{G})$ to $\mathcal{G}$, as we did in Section~\ref{sec:DistforGroups}. The result is the ordinal pattern-based distance for groups $D_{C,K}^{(\Phi )}(a,b)$ proposed in Definition~\ref{DefDist(a,b)}.

Metric properties of finite groups is an unsual tool in time series analysis
in algebraic representations. Even in the ordinal representation, distances
or similarities between time series are usually measured with functionals of
probability distributions such as divergences or functions thereof. There
are also distances defined in the groups themselves, based on generating
sets, which were the subject of Section~\ref{sec:DistwithGenerators}.
Actually, the distances $d_{C}$ and $d_{K}$ in the permutations groups,
discussed in Section~\ref{sec:OrdPatyEditDist}, are examples of distances
with respect to generating sets. A possible advantage of the ordinal
pattern-based distance proposed in this paper for any group $\mathcal{G}$ is
its simplicity and generality, since it dispenses with generating sets and
minimal descriptions of elements via generators. Furthermore, there are
general-purpose algorithms to efficiently calculate the distances $d_{C}$
and $d_{K}$ in $\mathrm{Sym}(\mathcal{G})$ for the low and moderate group
cardinalities used in practice, see Table~\ref{TableTimes}.

In the previous sections we have presented the mathematical underpinnings of our approach, which include group actions, Cayley's theorem, and group representations, as well as its practical implementation. It is remarkable that Cayley's theorem gives permutations (or ordinal patterns) a certain universality in algebraic representations of time series, although other
choices or isomorphisms can be more convenient in practice. For example, the Klein group (Example~\ref{Example Dist for Klein group}) is isomorphic to $\mathbb{Z}_{2}\times \mathbb{Z}_{2}$ endowed with XOR addition and the cyclic group $\{\theta ^{0},\theta ^{1},\ldots,\theta ^{n-1}\}$ endowed with $\theta ^{i}\ast \theta ^{j}=\theta ^{k}$, where $k=i+j$ \textrm{mod} $n$, is
isomorphic to $\{0,1,\ldots,n-1\}$ equipped with addition modulo $n$. Some of these groups were used in the previous sections to illustrate the theory. In contrast to the specificities of each group, the group distance introduced in Definition~\ref{DefDist(a,b)} is completely general, since the only input it needs is the multiplication table of the group, and can be efficiently computed. Possible applications were only touched upon in
Section~\ref{sec:DistforTS} because they are the subject of ongoing
research. The numerical simulations in Section~\ref{sec:NumSimul} show the potential of the metric tools discussed in this paper in the analysis of group-valued \mbox{time series}.

\clearpage
\appendix

\begin{landscapeblock}

\section[\appendixname~\thesection]{Cayley and Kendall Distances for the Group \boldmath{$\mathrm{Sym}(4)$} (Example~\ref{ExampleSym(3)}) \label{appendixa}}


\begin{table}[H]
\caption{Distance $d_{C}$ for $\mathrm{Sym}(4)$.}  \label{tab:table01}
\centering
\scriptsize
\setlength{\tabcolsep}{3pt}
\begin{tabular*}{\linewidth}{@{\extracolsep{\fill}} l *{24}{c} @{}}
\toprule
\boldmath{$d_{C}$} & \textbf{1234} & \textbf{1243} & \textbf{1324} & \textbf{1342} & \textbf{1423} & \textbf{1432} & \textbf{2134} & \textbf{2143} & \textbf{2314} & \textbf{2341} & \textbf{2413} & \textbf{2431} & \textbf{3124} & \textbf{3142} & \textbf{3214} & \textbf{3241} & \textbf{3412} & \textbf{3421} & \textbf{4123} & \textbf{4132} & \textbf{4213} & \textbf{4231} & \textbf{4312} & \textbf{4321} \\ \midrule
1234         & 0    & 1    & 1    & 2    & 2    & 1    & 1    & 2    & 2    & 3    & 3    & 2    & 2    & 3    & 1    & 2    & 2    & 3    & 3    & 2    & 2    & 1    & 3    & 2    \\
1243         & 1    & 0    & 2    & 1    & 1    & 2    & 2    & 1    & 3    & 2    & 2    & 3    & 3    & 2    & 2    & 1    & 3    & 2    & 2    & 3    & 1    & 2    & 2    & 3    \\
1324         & 1    & 2    & 0    & 1    & 1    & 2    & 2    & 3    & 1    & 2    & 2    & 3    & 1    & 2    & 2    & 3    & 3    & 2    & 2    & 3    & 3    & 2    & 2    & 1    \\
1342         & 2    & 1    & 1    & 0    & 2    & 1    & 3    & 2    & 2    & 1    & 3    & 2    & 2    & 1    & 3    & 2    & 2    & 3    & 3    & 2    & 2    & 3    & 1    & 2    \\
1423         & 2    & 1    & 1    & 2    & 0    & 1    & 3    & 2    & 2    & 3    & 1    & 2    & 2    & 3    & 3    & 2    & 2    & 1    & 1    & 2    & 2    & 3    & 3    & 2    \\
1432         & 1    & 2    & 2    & 1    & 1    & 0    & 2    & 3    & 3    & 2    & 2    & 1    & 3    & 2    & 2    & 3    & 1    & 2    & 2    & 1    & 3    & 2    & 2    & 3    \\
2134         & 1    & 2    & 2    & 3    & 3    & 2    & 0    & 1    & 1    & 2    & 2    & 1    & 1    & 2    & 2    & 3    & 3    & 2    & 2    & 1    & 3    & 2    & 2    & 3    \\
2143         & 2    & 1    & 3    & 2    & 2    & 3    & 1    & 0    & 2    & 1    & 1    & 2    & 2    & 1    & 3    & 2    & 2    & 3    & 1    & 2    & 2    & 3    & 3    & 2    \\
2314         & 2    & 3    & 1    & 2    & 2    & 3    & 1    & 2    & 0    & 1    & 1    & 2    & 2    & 3    & 1    & 2    & 2    & 3    & 3    & 2    & 2    & 3    & 1    & 2    \\
2341         & 3    & 2    & 2    & 1    & 3    & 2    & 2    & 1    & 1    & 0    & 2    & 1    & 3    & 2    & 2    & 1    & 3    & 2    & 2    & 3    & 3    & 2    & 2    & 1    \\
2413         & 3    & 2    & 2    & 3    & 1    & 2    & 2    & 1    & 1    & 2    & 0    & 1    & 3    & 2    & 2    & 3    & 1    & 2    & 2    & 3    & 1    & 2    & 2    & 3    \\
2431         & 2    & 3    & 3    & 2    & 2    & 1    & 1    & 2    & 2    & 1    & 1    & 0    & 2    & 3    & 3    & 2    & 2    & 1    & 3    & 2    & 2    & 1    & 3    & 2    \\
3124         & 2    & 3    & 1    & 2    & 2    & 3    & 1    & 2    & 2    & 3    & 3    & 2    & 0    & 1    & 1    & 2    & 2    & 1    & 1    & 2    & 2    & 3    & 3    & 2    \\
3142         & 3    & 2    & 2    & 1    & 3    & 2    & 2    & 1    & 3    & 2    & 2    & 3    & 1    & 0    & 2    & 1    & 1    & 2    & 2    & 1    & 3    & 2    & 2    & 3    \\
3214         & 1    & 2    & 2    & 3    & 3    & 2    & 2    & 3    & 1    & 2    & 2    & 3    & 1    & 2    & 0    & 1    & 1    & 2    & 2    & 3    & 1    & 2    & 2    & 3    \\
3241         & 2    & 1    & 3    & 2    & 2    & 3    & 3    & 2    & 2    & 1    & 3    & 2    & 2    & 1    & 1    & 0    & 2    & 1    & 3    & 2    & 2    & 1    & 3    & 2    \\
3412         & 2    & 3    & 3    & 2    & 2    & 1    & 3    & 2    & 2    & 3    & 1    & 2    & 2    & 1    & 1    & 2    & 0    & 1    & 3    & 2    & 2    & 3    & 1    & 2    \\
3421         & 3    & 2    & 2    & 3    & 1    & 2    & 2    & 3    & 3    & 2    & 2    & 1    & 1    & 2    & 2    & 1    & 1    & 0    & 2    & 3    & 3    & 2    & 2    & 1    \\
4123         & 3    & 2    & 2    & 3    & 1    & 2    & 2    & 1    & 3    & 2    & 2    & 3    & 1    & 2    & 2    & 3    & 3    & 2    & 0    & 1    & 1    & 2    & 2    & 1    \\
4132         & 2    & 3    & 3    & 2    & 2    & 1    & 1    & 2    & 2    & 3    & 3    & 2    & 2    & 1    & 3    & 2    & 2    & 3    & 1    & 0    & 2    & 1    & 1    & 2    \\
4213         & 2    & 1    & 3    & 2    & 2    & 3    & 3    & 2    & 2    & 3    & 1    & 2    & 2    & 3    & 1    & 2    & 2    & 3    & 1    & 2    & 0    & 1    & 1    & 2    \\
4231         & 1    & 2    & 2    & 3    & 3    & 2    & 2    & 3    & 3    & 2    & 2    & 1    & 3    & 2    & 2    & 1    & 3    & 2    & 2    & 1    & 1    & 0    & 2    & 1    \\
4312         & 3    & 2    & 2    & 1    & 3    & 2    & 2    & 3    & 1    & 2    & 2    & 3    & 3    & 2    & 2    & 3    & 1    & 2    & 2    & 1    & 1    & 2    & 0    & 1    \\
4321         & 2    & 3    & 1    & 2    & 2    & 3    & 3    & 2    & 2    & 1    & 3    & 2    & 2    & 3    & 3    & 2    & 2    & 1    & 1    & 2    & 2    & 1    & 1    & 0   \\
\bottomrule
\end{tabular*}
\end{table}

\begin{table}[H]
\caption{Distance $d_{K}$ for $\mathrm{Sym}(4)$.}\label{tab:table02}
\centering
\scriptsize
\setlength{\tabcolsep}{3pt}
\begin{tabular*}{\linewidth}{@{\extracolsep{\fill}} l *{24}{c} @{}}
\toprule
\boldmath{$d_{K}$} & \textbf{1234} & \textbf{1243} & \textbf{1324} & \textbf{1342} & \textbf{1423} & \textbf{1432} & \textbf{2134} & \textbf{2143} & \textbf{2314} & \textbf{2341} & \textbf{2413} & \textbf{2431} & \textbf{3124} & \textbf{3142} & \textbf{3214} & \textbf{3241} & \textbf{3412} & \textbf{3421} & \textbf{4123} & \textbf{4132} & \textbf{4213} & \textbf{4231} & \textbf{4312} & \textbf{4321} \\
\midrule
1234    & 0    & 1    & 1    & 2    & 2    & 3    & 1    & 2    & 2    & 3    & 3    & 4    & 2    & 3    & 3    & 4    & 4    & 5    & 3    & 4    & 4    & 5    & 5    & 6    \\
1243    & 1    & 0    & 2    & 3    & 1    & 2    & 2    & 1    & 3    & 4    & 2    & 3    & 3    & 4    & 4    & 5    & 5    & 6    & 2    & 3    & 3    & 4    & 4    & 5    \\
1324    & 1    & 2    & 0    & 1    & 3    & 2    & 2    & 3    & 3    & 4    & 4    & 5    & 1    & 2    & 2    & 3    & 3    & 4    & 4    & 3    & 5    & 6    & 4    & 5    \\
1342    & 2    & 3    & 1    & 0    & 2    & 1    & 3    & 4    & 4    & 5    & 5    & 6    & 2    & 1    & 3    & 4    & 2    & 3    & 3    & 2    & 4    & 5    & 3    & 4    \\
1423    & 2    & 1    & 3    & 2    & 0    & 1    & 3    & 2    & 4    & 5    & 3    & 4    & 4    & 3    & 5    & 6    & 4    & 5    & 1    & 2    & 2    & 3    & 3    & 4    \\
1432    & 3    & 2    & 2    & 1    & 1    & 0    & 4    & 3    & 5    & 6    & 4    & 5    & 3    & 2    & 4    & 5    & 3    & 4    & 2    & 1    & 3    & 4    & 2    & 3    \\
2134    & 1    & 2    & 2    & 3    & 3    & 4    & 0    & 1    & 1    & 2    & 2    & 3    & 3    & 4    & 2    & 3    & 5    & 4    & 4    & 5    & 3    & 4    & 6    & 5    \\
2143    & 2    & 1    & 3    & 4    & 2    & 3    & 1    & 0    & 2    & 3    & 1    & 2    & 4    & 5    & 3    & 4    & 6    & 5    & 3    & 4    & 2    & 3    & 5    & 4    \\
2314    & 2    & 3    & 3    & 4    & 4    & 5    & 1    & 2    & 0    & 1    & 3    & 2    & 2    & 3    & 1    & 2    & 4    & 3    & 5    & 6    & 4    & 3    & 5    & 4    \\
2341    & 3    & 4    & 4    & 5    & 5    & 6    & 2    & 3    & 1    & 0    & 2    & 1    & 3    & 4    & 2    & 1    & 3    & 2    & 4    & 5    & 3    & 2    & 4    & 3    \\
2413    & 3    & 2    & 4    & 5    & 3    & 4    & 2    & 1    & 3    & 2    & 0    & 1    & 5    & 6    & 4    & 3    & 5    & 4    & 2    & 3    & 1    & 2    & 4    & 3    \\
2431    & 4    & 3    & 5    & 6    & 4    & 5    & 3    & 2    & 2    & 1    & 1    & 0    & 4    & 5    & 3    & 2    & 4    & 3    & 3    & 4    & 2    & 1    & 3    & 2    \\
3124    & 2    & 3    & 1    & 2    & 4    & 3    & 3    & 4    & 2    & 3    & 5    & 4    & 0    & 1    & 1    & 2    & 2    & 3    & 5    & 4    & 6    & 5    & 3    & 4    \\
3142    & 3    & 4    & 2    & 1    & 3    & 2    & 4    & 5    & 3    & 4    & 6    & 5    & 1    & 0    & 2    & 3    & 1    & 2    & 4    & 3    & 5    & 4    & 2    & 3    \\
3214    & 3    & 4    & 2    & 3    & 5    & 4    & 2    & 3    & 1    & 2    & 4    & 3    & 1    & 2    & 0    & 1    & 3    & 2    & 6    & 5    & 5    & 4    & 4    & 3    \\
3241    & 4    & 5    & 3    & 4    & 6    & 5    & 3    & 4    & 2    & 1    & 3    & 2    & 2    & 3    & 1    & 0    & 2    & 1    & 5    & 4    & 4    & 3    & 3    & 2    \\
3412    & 4    & 5    & 3    & 2    & 4    & 3    & 5    & 6    & 4    & 3    & 5    & 4    & 2    & 1    & 3    & 2    & 0    & 1    & 3    & 2    & 4    & 3    & 1    & 2    \\
3421    & 5    & 6    & 4    & 3    & 5    & 4    & 4    & 5    & 3    & 2    & 4    & 3    & 3    & 2    & 2    & 1    & 1    & 0    & 4    & 3    & 3    & 2    & 2    & 1    \\
4123    & 3    & 2    & 4    & 3    & 1    & 2    & 4    & 3    & 5    & 4    & 2    & 3    & 5    & 4    & 6    & 5    & 3    & 4    & 0    & 1    & 1    & 2    & 2    & 3    \\
4132    & 4    & 3    & 3    & 2    & 2    & 1    & 5    & 4    & 6    & 5    & 3    & 4    & 4    & 3    & 5    & 4    & 2    & 3    & 1    & 0    & 2    & 3    & 1    & 2    \\
4213    & 4    & 3    & 5    & 4    & 2    & 3    & 3    & 2    & 4    & 3    & 1    & 2    & 6    & 5    & 5    & 4    & 4    & 3    & 1    & 2    & 0    & 1    & 3    & 2    \\
4231    & 5    & 4    & 6    & 5    & 3    & 4    & 4    & 3    & 3    & 2    & 2    & 1    & 5    & 4    & 4    & 3    & 3    & 2    & 2    & 3    & 1    & 0    & 2    & 1    \\
4312    & 5    & 4    & 4    & 3    & 3    & 2    & 6    & 5    & 5    & 4    & 4    & 3    & 3    & 2    & 4    & 3    & 1    & 2    & 2    & 1    & 3    & 2    & 0    & 1    \\
4321    & 6    & 5    & 5    & 4    & 4    & 3    & 5    & 4    & 4    & 3    & 3    & 2    & 4    & 3    & 3    & 2    & 2    & 1    & 3    & 2    & 2    & 1    & 1    & 0   \\
\bottomrule
\end{tabular*}
\end{table}

\end{landscapeblock}
\clearpage




\end{document}